\DeclareMathOperator*{\eq}{=}
\newcommand{\dfn}{\stackrel{\triangle}{=}}
\newtheorem{theorem}{Theorem}
\title{Converse Bounds on Modulation-Estimation Performance for the Gaussian Multiple-Access Channel \footnote{This work was presented in part at the 2016 IEEE International Symposium on Information Theory (ISIT 2016), Barcelona, Spain, July 10-15, 2016}}
\author[1]{Ay\c{s}e \"{U}nsal}
\author[2]{Raymond Knopp}
\author[3]{Neri Merhav}
\affil[1]{Univ Lyon, INSA Lyon, Inria, CITI, France, {\tt \small{ayse.unsal@insa-lyon.fr}}} 
\affil[2]{Communications Systems Department, Eurecom, 
France, {\tt \small{raymond.knopp@eurecom.fr}} }
\affil[3]{The Andrew and Erna Viterbi Faculty of Electrical Engineering, Technion, Israel, {\tt \small{merhav@ee.technion.ac.il}}} 
\begin{document}

\maketitle
\thispagestyle{plain}
\begin{abstract}
  This paper focuses on the problem of separately modulating and jointly estimating two independent continuous-valued parameters sent over 
a Gaussian multiple-access channel (MAC) under the mean square error (MSE) criterion. 
To this end, we first improve an existing lower bound on the MSE that is obtained using the parameter modulation-estimation techniques for the single-user additive white Gaussian noise (AWGN) channel.
As for the main contribution of this work, this improved modulation-estimation analysis is generalized to the model of the two-user Gaussian MAC, which will likely become an important mathematical framework for the analysis of remote sensing problems in wireless networks. We present
outer bounds to the achievable region in the plane of the MSE's of the two user
parameters, which provides a trade-off between the MSE's, in addition to the upper bounds on the achievable region of the MSE exponents, namely, the exponential decay rates of these MSE's in the asymptotic regime of long blocks.
\begin{IEEEkeywords}
Parameter modulation-estimation, multiple-access channel, error exponents, MSE
\end{IEEEkeywords}
\end{abstract}

\pagestyle{plain}
\section{Introduction \label{sec:intro}}

Before addressing the problem of joint modulation-estimation for the Gaussian MAC, let us refer first to
the more fundamental single-user modulation-estimation problem. In this setting, a single continuous--valued random parameter $U$ is encoded (modulated) into an $N$-dimensional power-limited vector $\mathbf{x}(U)$ and 
transmitted over an additive-white Gaussian noise (AWGN) channel
\cite{goblick,WynerZiv69,WozencraftJacobs} as shown in Fig. \ref{fig:model}(a). The corresponding $N$-dimensional channel output 
vector is given by $\mathbf{y}=\mathbf{x}(U)+\mathbf{z}$, where $\mathbf{z}$ is a
Gaussian noise vector with independent and identically distributed (i.i.d.) components, which are independent also of $U$. The channel output vector $\mathbf{y}$ is used by the receiver to estimate $U$ by an estimator $\hat{U}(\mathbf{y})$.
The goal is to derive a lower bound to the
MSE, $\mathbf{E}(U-\hat{U}(\mathbf{y}))^2$, that applies to every modulator
$\mathbf{x}(\cdot)$, that is subjected to a given power constraint, and to every
estimator $\hat{U}(\mathbf{y})$ \cite[Chapter 8]{WozencraftJacobs}. 
More recently in \cite{Merhav12b},
this class of transmission problems was given 
the name {\em parameter modulation-estimation}, which we believe, will likely 
become an important mathematical framework to analyze various 
remote sensing problems that may arise in fifth generation 
wireless networks. 
\begin{figure}
\centering
\includegraphics[width=0.7\linewidth]{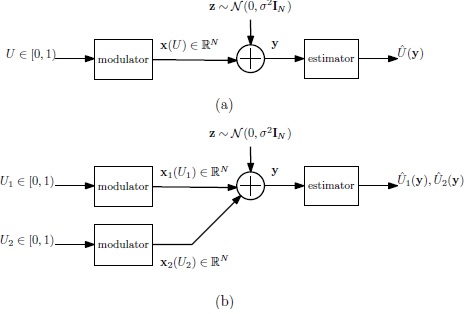}
\caption{System Models \label{fig:model}}
\end{figure}
The purpose of this work is to extend the described problem, as well as its analysis and results,
to the model of the discrete-time two-user Gaussian MAC, where two independent 
parameters, denoted by $U_1$ and $U_2$, are conveyed from two separate transmitters and
jointly estimated at the receiver. This model is shown in Fig. \ref{fig:model}(b).
The aim is to derive outer bounds on the region of best achievable MSE's associated with any modulators (subjected to power constraints) and estimators of these parameters. It should be noted that in the context of the MAC model considered here, there exists an interesting trade--off that is not seen in the single--user case described in the first paragraph above.
A better modulator for one of the users is good, of course, for the estimation of the corresponding parameter at the receiver side, because it amounts to high sensitivity of the likelihood function to this parameter. However, at the same time, and for the very same reason, it comes at the expense of the estimation performance of the other user (for which the parameter of the first user is a nuisance parameter). Indeed, such a trade--off is manifested in the boundary curves of the achievable regions that we obtain, which are always monotonically non-increasing functions, namely, smaller MSE values in one parameter impose higher lower bounds on the MSE values of the other. This paper builds on relationships between modulation and coding and between estimation and detection.

The remote-sensing application is one where the random-variables $U_i$ are measured by a communicating device equipped with some form of analog sensor.  The resulting measurements are conveyed to the network via the uplink of a wireless communication system. In the near future such devices will use conventional cellular access, albeit with specially-tailored waveforms, to feed data centers with physical information observed in so-called {\em smart cities} or remote areas. These applications will often impose extremely low-periodicity sporadic transmission coupled with long lifetime batteries or solar cells in order to remain embedded in nature with little or no maintenance for long periods of time. In addition, the problem addressed here is also related to more general ranging estimation problems where the random parameters are induced by the channel.  As an example, consider a satellite or cellular positioning system where the $U_i$ represent two time-delays which, when estimated at the receiver, are used to estimate the position of the receiver. The framework considered here can therefore be extended to analyze the fundamental performance limits in such systems. 
\subsection{Related Work}

The majority of work dealing with this class of problems considers transmission 
on a continuous-time channel using finite-energy waveforms without bandwidth constraints. In \cite{goblick}, Goblick provided a 
lower bound of the exponential order of $\exp\left(-2\mathcal{E}/N_0\right)$, where 
$\mathcal{E}$ is the energy used to convey $U$
and $N_0/2$ is the two-sided power spectral density of the channel noise process. Goblick also provided several examples of parameter modulation-estimation schemes, 
one of them turns out to achieve the best asymptotic performance, namely, MSE of the exponential order of $\exp\left(-\mathcal{E}/3N_0\right)$. This is a simple digital scheme, which is based on first uniformly quantizing the parameter into one out of $M$ points and then transmitting the index of the quantized parameter to the receiver, using $M$-ary orthogonal modulation scheme. Another modulation strategy, considered this problem in continuous-time,
was given in \cite[pp.\ 623]{WozencraftJacobs} where the parameter is reflected in the delay of a 
purely analog signaling pulse sent across the channel, namely, pulse position modulation (PPM). 
When the pulse bandwidth is unlimited, this system achieves the same exponential behaviour as Goblick's scheme. 
This scheme also provided a link to the classical ranging problem 
where the objective is to estimate the random delay of an incoming waveform 
corrupted by Gaussian noise \cite{ZivZakai}. 
In \cite{WynerZiv69}, Wyner and Ziv showed that Goblick's 
lower bound could be improved to the order of $\exp\left(-\mathcal{E}/2N_0\right)$. 
Cohn \cite{Cohn} and Burnashev \cite{Burnashev79}, \cite{Burnashev84}, \cite{Burnashev85}, further improved the multiplicative factor at the MSE exponent, progressively from 1/2.889 to 1/2.896. then 1/2.970, and finally to 1/3.000, thus closing the gap to Goblick's 
practical scheme. In particular, despite the significance of the presented results, unfortunately, \cite{Cohn} is not well known
as it has never been published and hence is not easily accessible to the general public. In a nutshell, in \cite{Cohn} Cohn presented lower bounds on the average MSE in estimating the message of a single user 
using a geometric approach for simplex signal sets as well as the general case. 
The main contribution of \cite{Merhav12b} was the characterization of 
the parameter modulation-estimation problem for infinite-dimensional transmission over the continuous-time AWGN channel. 
A recent example of a similar scenario as the present paper can be found in \cite{Unsal-CISS, Unsal-thesis}, 
where lower bounds on the MSE region are provided for 
the transmission of two correlated analog source samples with and without causal feedback on 
the discrete-time AWGN MAC without a constraint on the number of signal dimensionality. The main difference between the current paper and \cite{Unsal-CISS, Unsal-thesis} is the analysis technique that is used. \cite{Unsal-CISS, Unsal-thesis} use an information--theoretic approach to obtain lower bounds.

\subsection{Contributions}

This paper studies the problem of jointly modulating and estimating two independent continuous-valued random variables encoded into an $N$--dimensional vector and transmitted over an AWGN channel to be estimated at the receiver end. The performance criterion is chosen as the MSE, which is characterized in two different ways as follows. Firstly, we derive outer bounds on the achievable region of pairs $(\mathrm{MSE}_1,\mathrm{MSE}_2)$, where $\mathrm{MSE}_1$ and $\mathrm{MSE}_2$ are the MSE's associated with arbitrary parameters, using a generalization of Shannon's zero-rate lower bound \cite{Shannon59} for the two-user discrete-time MAC, which allows us to characterize the MSE region in terms of the signal--to--noise ratios. We present outer bounds to the achievable region in the plane of the MSE's, basically one MSE associated to one of the users is bounded by a function that depends on the MSE associated to the other user. Thus, we obtain a trade-off between the MSE's based on some parameter.

In addition, we investigate the exponential behaviour of $(\mathrm{MSE}_1,\mathrm{MSE}_2)$ by characterizing a lower bound to the region of achievable pairs of MSE exponents for any joint parameter-modulation estimation scheme. To this end,
we adapt the multiple-access results of \cite{Nazari-thesis} to the discrete-time AWGN channel. In order to find the tightest characterization, we also use the bounds on the on the reliability function of the Gaussian channel proposed in \cite{Shannon59, Ashikhmin}. Coupled with the results of \cite{Weng}, we provide the means to make use of single-user error exponents for the characterization of multiuser channels.

\subsection{Outline}
In Section \ref{sec:model}, we describe the system model and formalize the problem. In Section \ref{sec:single_user}, we begin with the single-user case and present lower bounds on the MSE itself and its MSE exponent, as a preparatory step to be used later in the MAC model.
Section \ref{sec:MAC} is focused on the generalization of parameter modulation-estimation problem to a two-user Gaussian MAC in two subsections. In Subsections \ref{subsec:dual_zr_linear} and \ref{sec:nonzero}, respectively, we present new lower bounds on the MSE's and the MSE exponents.
The proposed bounds are numerically compared in Section \ref{sec:numerical}. Finally, in Section \ref{sec:conc}, we draw conclusions from our results.

\section{Problem Formulation and Signal Models \label{sec:model}}

\subsection{Single-user setting \label{sec:model_1}}
We consider lower bounds on the MSE of modulation-estimation schemes for a random parameter $U$, that is uniformly distributed over the interval $[0,1)$. 
\footnote{The results presented in this paper can be quite easily adapted to other source distributions.}
The parameter $U$ is conveyed by a modulator, which maps $U$ into a channel input vector $\mathbf{x}(U)$ that is transmitted over an $N$-dimensional memoryless AWGN channel, which is assumed to be phase-synchronous. In general, we have the following signal model
\begin{equation}\label{eq:AWGN_single}
 \mathbf{y} = \mathbf{x}(U) + \mathbf{z} 
\end{equation} where $\mathbf{x}(U) $ is constrained in energy as 
\begin{equation} \label{eq:en_const}
\|\mathbf{x}(U)\|^2\leq N\mathcal{S} = \mathcal{E},
\end{equation} $\mathcal{S}$ and $\mathcal{E}$ being the power and energy limitations, respectively, and the noise covariance matrix is given by
\begin{equation}
\mathbf{E}\mathbf{z}\mathbf{z}^T=\sigma^2 \mathbf{I}_N.
\end{equation} Here the superscript $T$ denotes the transposition of a vector and $\mathbf{I}_N$ is the $N\times N$ identity matrix. 
At the receiver, we consider an estimator $\hat{U}(\mathbf{y})$ with corresponding $\mathrm{MSE}_{\mathrm{s}}={\mathbf{E}[U-\hat{U}(\mathbf{y})]^2}$. Let us also define the {\it asymptotic MSE exponent} as
\begin{equation} \label{eq:MSE_expo_1}
\epsilon_{\mathrm{s}} \dfn -\liminf_{N\to\infty}\frac{1}{N}\log \mathbf{E}[\hat{U}(\mathbf{y})-U]^2.
\end{equation} 

\subsection{Two-user setting \label{sec:model_2}}

For this setting, we generalize the model of eq. (\ref{eq:AWGN_single})
to a model that includes two independent random variables, $U_1$ and $U_2$, both uniformly distributed over $[0,1)$. These two parameters are separately conveyed by the modulators of two different users, which generate the channel input vectors $\mathbf{x}_1(U_1)$ and $\mathbf{x}_2(U_2)$ over an $N$-dimensional real-valued AWGN MAC obeying the following signal model
\begin{equation} \label{eq:MAC_def}
\mathbf{y} =\mathbf{x}_1(U_1) + \mathbf{x}_2(U_2) +\mathbf{z}.
\end{equation} 
The modulators are constrained in energy as
\begin{equation} \label{eq:enr}
\|\mathbf{x}_j(U_j)\|^2\leq N\mathcal{S}_j = \mathcal{E}_j, \;\forall U_j, \; \textrm{for}\;\;j=1,2
\end{equation}
and the noise covariance matrix is as before. As in the single--user case of Subsection \ref{sec:model_1}, at the receiver, we consider estimators $\hat{U}_j(\mathbf{y})$ with MSE's, $\mathrm{MSE}_j=\mathbf{E}[U_j-\hat{U}_j(\mathbf{y})]^2$, $j=1,2$. 
As mentioned earlier, in Section \ref{sec:MAC}, we derive outer bounds to the region of achievable MSE pairs $(\mathrm{MSE}_1,\mathrm{MSE}_2)$, which apply to arbitrary modulators and estimators subject to the aforementioned power limitations, $\mathcal{S}_1$ and $\mathcal{S}_2$. 
The first characterization is for a given finite $N$ and it provides a direct characterization of  $(\mathrm{MSE}_1,\mathrm{MSE}_2)$, whereas the second characterization is asymptotic and it characterizes the region in terms of the exponents $(\epsilon_1,\epsilon_2)$ where 
\begin{equation} \label{eq:expo_mac}
\epsilon_j\dfn -\liminf_{N\to\infty}\frac{1}{N}\log \mathbf{E}[\hat{U}_j(\mathbf{y})-U_j]^2 , \; \textrm{for}\;j=1,2.
\end{equation}
\section{Single--User Channel \label{sec:single_user}}
In this section, we first recall the single-user approach from \cite{Merhav12b} and improve the lower bound on the MSE for any parameter-modulator scheme.
Additionally, we present a new bound on the MSE exponent of a single--user channel.

\subsection{An improved  lower bound \label{subsec:zr}}

It is shown in \cite[eq.\ (21)]{Merhav12b} that for the single-user problem, the probability that the absolute  
estimation error ${|\hat{U}(\mathbf{y})-U|}$ would exceed $\Delta/2$, for a given $\Delta > 0$, is lower bounded as follows
\begin{equation} \label{eq:LB}
\Pr \{|\hat{U}(\mathbf{y})-U|>\Delta/2\}\geq L_B(\Delta)
\end{equation} where $L_B(\Delta)$ designates a lower bound to be specified later.
To derive such a bound, one considers the following hypothesis testing problem with $M$ equiprobable hypotheses,
\begin{equation}\label{eq:hypo_single}
 \mathcal{H}_{i} : \mathbf{y} = \mathbf{x}(u+i{\Delta}) + \mathbf{z}, 
 \end{equation} for $i \in \{1,\cdots,M\}$ where $u$ is considered a parameter taking values in $\left[0,1-(M-1)\Delta \right)$.
The lower bound $L_B(\Delta)$ is derived by combining the Ziv-Zakai approach with any lower bound on the average probability of error of an arbitrary code at a given rate. Specifically, let $\hat{i}$ denote the maximum likelihood (ML) estimate of $i$, and let $P_{\mbox{\tiny e}}(u,\Delta)=\Pr\left(\hat{i}\neq i| u\right)$ denote the corresponding conditional probability of error, which is upper bounded as follows:
\begin{align}
\int_{0}^{1-(M-1)\Delta}du \cdot P_e(u,\Delta)
&\leq\frac{1}{M}\sum_{i=0}^{M-1}\int_{0}^{1-(M-1)\Delta}\mathrm{d}u\cdot\Pr\left\{|\hat{U}(\mathbf{y})-U|>\Delta/2{\Bigm |}U=u+i\Delta\right\}\notag\\
&=\frac{1}{M}\sum_{i=0}^{M-1}\int_{i\Delta}^{1-(M-1)\Delta+i\Delta}\mathrm{d}u\cdot\Pr\left\{|\hat{U}(\mathbf{y})-U|>\frac{\Delta}{2}{\Bigm |}U=u\right\}\notag\\
&\overset{(a)}{=}\frac{1}{M}\sum_{i=0}^{M-1}\Pr\left\{|\hat{U}(\mathbf{y})-U|>\Delta/2,i\Delta\leq U\leq 1-(M-1)\Delta+i\Delta\right\}\nonumber \\
&\leq \frac{1}{M}\Pr\left\{|\hat{U}(\mathbf{y})-U|>\Delta/2\right \} \label{eq:Merhav13mod}
\end{align}
We note that (\ref{eq:Merhav13mod}) is valid for all $M$ and $\Delta$ such that $(M-1)\Delta<1$. If we add the condition that $M\Delta>1$, which amounts to $1/\Delta<M<1+(1/\Delta)$ or equivalently $M=\left\lceil1/\Delta\right\rceil$, the intervals in step (a) become disjoint. This yields
\begin{equation}
\frac{1}{\left\lceil 1/\Delta \right\rceil}\sum_{i=0}^{\left\lceil 1/\Delta \right\rceil-1}\Pr\left\{|\hat{U}-U|>\frac{\Delta}{2},i\Delta\leq U\leq 1-(\left\lceil 1/\Delta\right\rceil-1)\Delta+i\Delta\right\}\leq \frac{1}{\left\lceil 1/\Delta\right\rceil}\Pr\left\{|\hat{U}(\mathbf{y})-U|>\frac{\Delta}{2}\right\} 
\end{equation}
Bounding the left hand side (l.h.s.) of (\ref{eq:Merhav13mod}) using any zero-rate bound for $M$-ary signals, $P_\mathrm{ZR}\left(\mathcal{E},\left\lceil\frac{1}{\Delta}\right\rceil\right)$ yields the bound
\begin{equation} \label{eq:def_LB}
 \left\lceil 1/\Delta\right\rceil(1+\Delta-\left\lceil 1/\Delta \right\rceil\Delta)\cdot P_\mathrm{ZR}\left(\mathcal{E},\left\lceil 1/\Delta\right\rceil\right) \leq \Pr\left\{|\hat{U}(\mathbf{y})-U|>\Delta/2\right\}
\end{equation}
which is $M$ times larger than the original result given by \cite[eq. (21)]{Merhav12b}. The lower bound $L_B(\Delta)$ corresponds to the l.h.s. of (\ref{eq:def_LB}).
The right hand side of the last inequality is related to the MSE according to
\begin{align} \label{eq:mse_derv}
&\int_{0}^{1} d \Delta \cdot \Delta \cdot \Pr \{|\hat{U}(\mathbf{y})-U|>\Delta/2\}\nonumber \\ 
&\overset{(a)}{\leq}4 \int_{0}^{1} d \delta \cdot \delta \cdot \Pr \{|\hat{U}(\mathbf{y})-U|>\delta\}\overset{(b)}{=} 2\mathbf{E} [\hat{U}(\mathbf{y})-U]^2
\end{align} where in (a), we changed the integration variable to $\delta=\Delta/2$ and the integration 
interval was extended to $[0,1)$, whereas in (b), the following identity was used
\begin{equation} \label{eq:MSE_def}
\mathbf{E}[\hat{U}(\mathbf{y})-U]^2 =2\int_{0}^{1}d \Delta \cdot \Delta \cdot \Pr \{|\hat{U}(\mathbf{y})-U|>\Delta\}. 
\end{equation}
Combining (\ref{eq:Merhav13mod}) with (\ref{eq:mse_derv}), the improved single-user lower bound is given by
\begin{align} \label{merhav-mod-lhs}
\mathrm{MSE}_{\mathrm{s}} &\geq \frac{1}{2} \int_{0}^{1}d \Delta \left\lceil 1/\Delta\right\rceil \Delta  \left(1+ \Delta- \Delta \left\lceil 1/\Delta\right\rceil\right) P_\mathrm{ZR}\left(\mathcal{E},\left\lceil 1/\Delta\right\rceil\right) \nonumber
 \\
&=\frac{1}{2} \sum_{i=2}^{\infty}\int_{1/i}^{1/(i-1)}d \Delta \cdot \left(  \Delta i+ \Delta^2 i- \Delta^2 i^2\right) \cdot P_\mathrm{ZR}\left(\mathcal{E},i\right) \notag \\
&= \frac{1}{2} \sum_{i=2}^{\infty} \frac{3i-2}{6i^2(i-1)^2} P_\mathrm{ZR}\left(\mathcal{E},i\right).
 \end{align} 

In what follows we consider two zero-rate bounds.
\subsubsection{Shannon zero-rate bound \cite{Shannon59}}
In \cite[eq. (81)]{Shannon59} we have the general zero-rate lower bound 
\begin{equation} \label{eq:Sh_single_zero}
P_\mathrm{ZR}^{\mathrm{Shannon}}\left(\mathcal{E},M\right)\triangleq\frac{1}{M}\sum_{m=2}^{M}Q\left(\sqrt{\frac{m}{m-1}\left(\frac{\mathcal{E}}{2 \sigma^2}\right)}\right)
\end{equation}
which is valid for all $N$ and can be used in conjunction with (\ref{merhav-mod-lhs}) to bound the MSE for a point-to-point AWGN channel.

\subsubsection{A new zero-rate lower bound \label{subsec:poli}}

Using the Polyanskiy {\em et al.} converse \cite[Theorem 41]{Polyanskiy} for the AWGN channel which provides a lower bound on the average error probability for any $M$-ary signal set in $N$-dimensions, we propose a new lower bound on the error-probability for $N\rightarrow\infty$ under the finite-energy constraint in (\ref{eq:en_const}) given as
\begin{equation} \label{eq:Poly_single_zero}
P_\mathrm{ZR}^{\mathrm{P}}\left(\mathcal{E},M\right)\triangleq Q\left(\frac{\sqrt{\mathcal{E}}}{\sigma}(1+\mu)-Q^{-1}\left(\frac{1}{M}\right)\right)
\end{equation}
for any arbitrarily small $\mu>0$. The derivation of $P_\mathrm{ZR}^{\mathrm{P}}\left(\mathcal{E},M\right)$ can be found in detail in Appendix \ref{subsec:App_poli}. The expression in (\ref{eq:Poly_single_zero}) is potentially tighter than (\ref{eq:Sh_single_zero}) for low signal energies since it increases to 1 with $M$ for a fixed energy as is the case for any real signal set. It is clearly looser asymptotically since the energy exponent for fixed $M$ is $\mathcal{E}/2\sigma^2$ and not $\mathcal{E}/4\sigma^2$. We show a comparison of (\ref{eq:Poly_single_zero}) and (\ref{eq:Sh_single_zero}) with the error probability of the simplex signal set for $M=256$ in Figure \ref{fig:newbound}. The latter is widely believed to be the optimal signal set for $M$-ary equal-energy signals.  We see that (\ref{eq:Poly_single_zero}) is much closer to the Simplex error-probability for low signal-energies (error probabilities below $10^{-2}$) and crosses the Shannon bound at an error-probability around $10^{-10}$.   

\begin{figure}
\centering
\includegraphics[width=0.6\linewidth]{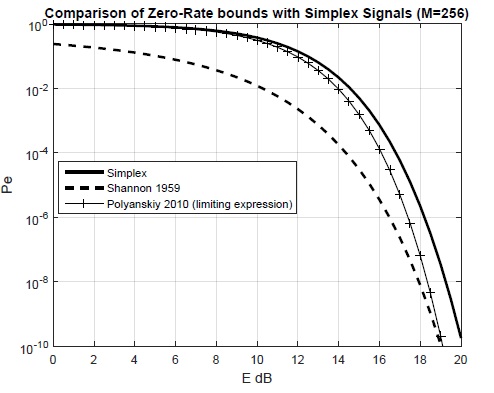}
\caption{Comparison of Zero-Rate Bounds with the error-probability of a Simplex ($M=256$)\label{fig:newbound}}
\end{figure}
\subsection{Upper bound on the MSE exponent  \label{subsec:non_zero_single}}

In this subsection, we introduce a new bound on the MSE exponent $\epsilon_{\mathrm{s}}$ defined by (\ref{eq:MSE_expo_1}) that makes use of any upper bound on the error exponent in a single user AWGN channel.
\begin{theorem} \label{t_single_nonzero}
For an arbitrary $N$-dimensional modulator $\mathbf{x}(U)$ subject to a power constraint given by (\ref{eq:en_const}) for transmission over the AWGN channel defined (\ref{eq:AWGN_single}) and for $R\geq 0$, the MSE exponent $\epsilon_{\mathrm{s}}$ as defined in (\ref{eq:MSE_expo_1}) is bounded by 
\begin{equation}\label{eq:expo_1}
\epsilon_{\mathrm{s}} \leq \min_R[2R+E_{\mbox{\tiny u}}(R)],
\end{equation}
where $E_{\mbox{\tiny u}}(R)$ is any upper bound on the error exponent function of the single user Gaussian channel.
\end{theorem}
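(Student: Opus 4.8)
The plan is to lower-bound $\mathrm{MSE}_{\mathrm{s}}$ through the probability of a gross estimation error, which in turn is controlled, via the reduction of Section~\ref{subsec:zr}, by the error probability of an induced $M$-ary code, to which the hypothesized converse applies. Fix $R\ge 0$ and a small $\epsilon>0$. For each blocklength $N$ put $M=M_N:=\lceil e^{NR}\rceil$ and $\Delta=\Delta_N:=1/M_N$, and write $R_N:=\tfrac1N\log M_N$, so that $R\le R_N\le R+\tfrac1N\log 2$ and $R_N\to R$. This choice satisfies $(M-1)\Delta<1$ while $M\Delta=1$, which is exactly the regime in which the intervals appearing in step~(a) of the derivation of \eqref{eq:Merhav13mod} tile $[0,1)$ and are disjoint; hence the chain leading to \eqref{eq:def_LB} goes through with $M=M_N$. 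The only modification is that, instead of inserting a zero-rate bound, I note that for every value of the nuisance parameter $u$ the codebook $\{\mathbf{x}(u+i\Delta)\}_{i=0}^{M-1}$ obeys the energy constraint \eqref{eq:en_const}, so its maximum-likelihood error probability is at least, uniformly in $u$, the minimal error probability $P^{\ast}_{\mathrm{e}}(N,M)$ attainable by any blocklength-$N$, $M$-codeword code over the channel \eqref{eq:AWGN_single} subject to \eqref{eq:en_const}. Replacing $P_{\mathrm{ZR}}$ in \eqref{eq:def_LB} by this (also uniform) lower bound $P^{\ast}_{\mathrm{e}}(N,M)$ and specializing to $\Delta=1/M$, so that the prefactor $\lceil1/\Delta\rceil\,(1+\Delta-\lceil1/\Delta\rceil\Delta)=M\,(1-(M-1)\Delta)=1$, that inequality collapses to
\[
\Pr\!\left\{|\hat{U}(\mathbf{y})-U|>\tfrac{1}{2M}\right\}\ \ge\ P^{\ast}_{\mathrm{e}}(N,M).
\]

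Next I would pass to the MSE by a single Markov step, $\mathrm{MSE}_{\mathrm{s}}=\mathbf{E}[\hat{U}(\mathbf{y})-U]^{2}\ge (2M)^{-2}\,\Pr\{|\hat{U}(\mathbf{y})-U|>\tfrac{1}{2M}\}$, which gives $\mathrm{MSE}_{\mathrm{s}}\ge \tfrac{1}{4M^{2}}\,P^{\ast}_{\mathrm{e}}(N,M)$. It is worth stressing that it is precisely the factor $(2M)^{-2}=\tfrac14 e^{-2NR_N}$ — together with the fact that disjointness of the intervals forbids taking $\Delta$ smaller than $1/M$ — that produces the coefficient $2R$, rather than $R$, in the statement. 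It then remains to lower-bound $P^{\ast}_{\mathrm{e}}(N,M_N)$: since $E_{\mbox{\tiny u}}(\cdot)$ upper-bounds the error exponent function of the channel and the rate $R_N$ of these optimal codes tends to $R$, the converse meaning of such an upper bound is that $P^{\ast}_{\mathrm{e}}(N,M_N)\ge e^{-N(E_{\mbox{\tiny u}}(R)+\epsilon)}$ for all sufficiently large $N$.

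Combining the two estimates, $\mathrm{MSE}_{\mathrm{s}}\ge \tfrac14\,e^{-2NR_N}\,e^{-N(E_{\mbox{\tiny u}}(R)+\epsilon)}$ for all large $N$. Taking $-\tfrac1N\log(\cdot)$ and then $\limsup_{N\to\infty}$, and using $R_N\to R$ (no further continuity assumption is needed, since $R$ is fixed), this yields
\[
\epsilon_{\mathrm{s}}=-\liminf_{N\to\infty}\tfrac1N\log\mathrm{MSE}_{\mathrm{s}}\ \le\ 2R+E_{\mbox{\tiny u}}(R)+\epsilon .
\]
Letting $\epsilon\downarrow 0$ and then minimizing over $R\ge 0$ gives $\epsilon_{\mathrm{s}}\le\min_R[2R+E_{\mbox{\tiny u}}(R)]$, which is the claim.

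The step I expect to demand the most care is the passage from the (asymptotic) meaning of $E_{\mbox{\tiny u}}$ being an upper bound on the error exponent function to the usable finite-blocklength inequality $P^{\ast}_{\mathrm{e}}(N,\lceil e^{NR}\rceil)\ge e^{-N(E_{\mbox{\tiny u}}(R)+\epsilon)}$, together with the bookkeeping of the vanishing rate gap $R_N-R$. All the rest is a routine specialization of the machinery already set up in Section~\ref{subsec:zr}; the one substantive point is that $\Delta=1/M_N$ is the optimal spacing within the admissible range $[1/M_N,\,1/(M_N-1))$, because there the lower bound $(\Delta/2)^{2}\,M\,(1-(M-1)\Delta)\,P^{\ast}_{\mathrm{e}}(N,M)$ on $\mathrm{MSE}_{\mathrm{s}}$ is maximized.
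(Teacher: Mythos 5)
Your proposal is correct and follows essentially the same route as the paper: the Ziv--Zakai-type reduction of Subsection~\ref{subsec:zr} to an $M$-ary hypothesis test with $M\approx e^{NR}$, a channel-coding converse supplying $P^{\ast}_{\mathrm{e}}(N,M_N)\gtrsim e^{-NE_{\mbox{\tiny u}}(R)}$, and the $\Delta^2$ weighting producing the $2R$ term. The only (immaterial) difference is that the paper keeps the full integral $\int_0^1 d\Delta\,\Delta\,L_B(\Delta)$, substitutes $\Delta=e^{-RN}$, and extracts the exponent $\min_R[2R+E_{\mbox{\tiny u}}(R)]$ by Laplace's method, whereas you evaluate at the single optimal $\Delta=1/M_N$ and use one Markov step --- which is exactly the point at which the Laplace lower bound is attained.
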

\begin{proof}
Let us select $\Delta= e^{-RN}$ where $R\geq 0$ is a parameter (to be chosen later) in the general form of the bound 
\begin{equation}\label{eq:transform}
\mathbf{E}[\hat{U}(\mathbf{y})-U]^2\geq 2\int_{0}^{1}d \Delta \cdot \Delta \cdot L_B(\Delta),
\end{equation} where $L_B(\Delta)$ is the l.h.s. of (\ref{eq:def_LB}).
Changing the integration variable on the right-hand side (r.h.s.) of (\ref{eq:transform}) to $R$, we obtain
\begin{equation} \label{eq:single_nonzero}
\mathbf{E}[\hat{U}(\mathbf{y})-U]^2\geq \frac{N}{2} \int_{0}^{\infty} dR \cdot e^{-2RN}\cdot L_B(e^{-RN})
\end{equation}  
The r.h.s.\ of (\ref{eq:single_nonzero}) is bounded by an expression of the exponential order of $\smash{\exp \{-N \min_R[2R+E_{\mbox{\tiny u}}(R)]\}}= \mathrm{e}^{-N F}$ where $F \dfn \min_R[2R+E_{\mbox{\tiny u}}(R)]$. 
Finally, by taking the logarithms of both sides of (\ref{eq:single_nonzero}), dividing by $-N$, and passing to the limit $N\to\infty$, the proof of Theorem \ref{t_single_nonzero} is completed.
\end{proof}

As for an upper bound on the error exponent, $E_{\mbox{\tiny u}}(R)$, of the Gaussian channel, there are many options in the literature, such as Shannon's sphere-packing bound on the reliability function of the Gaussian channel \cite{Shannon59}, or a more recent bound by  Ashikhmin {\it et al.} \cite{Ashikhmin}, or others such as \cite{Burnashev} and \cite{Litsyn}. In this paper, we will use the results of \cite{Shannon59} and \cite{Ashikhmin} in our numerical evaluations due to their lower computational complexity relative to the others.

\subsubsection{Sphere-packing bound \label{subsubsec:sphere_packing}}
For rates confined to $[0,\mathcal{C})$, where $\mathcal{C}=(1/2) \log(1+A)$ is the Gaussian channel capacity, $A=\mathcal{S}/\sigma^2$ being the signal-to-noise ratio (SNR), Shannon's sphere-packing bound $E_{\mbox{\tiny sp}}(\psi(R),A)$ is an upper bound on the reliability function of the Gaussian channel $E(R,A)$ \cite{Shannon59} where $\psi(R)= \arcsin(e^{-R})$. The sphere-packing bound is given by
\begin{align}\label{eq:Shannon_rel}
E_{\mbox{\tiny sp}}(\psi(R),A)&=\frac{A}{2}-\frac{A(1-e^{-2R})}{4}+\frac{\sqrt{A(1-e^{-2R})(A(1-e^{-2R})+4)}}{4} +R \nonumber \\
&+\log 2 -\log \left( \sqrt{A(1-e^{-2R})}+\sqrt{A(1-e^{-2R})+4}\right)
\end{align}
The only positive and real minimizer of $E_{\mbox{\tiny sp}}(\psi(R),A)+2R$ where $E_{\mbox{\tiny sp}}(\psi(R),A)$ is given by (\ref{eq:Shannon_rel}) is obtained as 
\begin{equation}\label{eq:Rmin_sh}
R_{\min}=\frac{1}{2}\log \left \{\frac{A+\sqrt{A^2-2A+9}+3}{6}\right \}.
\end{equation} 

\subsubsection{Upper Bound by Ashikhmin {\it et al.}}
As for the second alternative to be used for $E_{\mbox{\tiny u}}(R)$ we have a more recent result by Ashikhmin {\it et al.} \cite[Theorem 1]{Ashikhmin}, which states that $E(R,A)\leq E_{\mbox{\tiny ABL}}(R,A)$, with $E_{\mbox{\tiny ABL}}(R,A)$ being defined as
\begin{equation}  \label{eq:Ashikhmin_rel}
E_{\mbox{\tiny ABL}}(R,A)=\underset{0\leq \rho \leq \rho_{k,l}}{\mathrm{min}} \underset{w,d}{\mathrm{max}} \left [\min \left( Ad^2/8, Aw^2/8- L_{\mbox{\tiny ABL}}(w,d,\rho) \right) \right]
\end{equation} where $0\leq d \leq d_{\max}$ and $d\leq w \leq w_{\max}$ with 
$$d_{\max}=\frac{\sqrt{2}(\sqrt{1+\rho_{kl}}-\sqrt{\rho_{kl}})}{\sqrt{1+2\rho_{kl}}}$$
and 
$$w_{\max}=\frac{\sqrt{2}(\sqrt{1+\rho}-\sqrt{\rho})}{\sqrt{1+2\rho}},$$ respectively. $\rho_{kl}$ is the root of the equality $$R-(1+\rho)H(\rho/(1+\rho))=0.$$ Here $H(x)$ denotes the binary entropy function. Lastly, for the inner minimization function of the bound $E_{\mbox{\tiny ABL}}(R,A)$, $L_{\mbox{\tiny ABL}}(w,d,\rho)$ is given by
\begin{equation}
L_{\mbox{\tiny ABL}}(w,d,\rho)=\min \left \{ \frac{Ad^2w^2}{8(4w^2-d^2)},F_{\mbox{\tiny ABL}}(1-w^2/2,\rho)\right \}
\end{equation}
with 
\begin{multline}
F_{\mbox{\tiny ABL}}(x,\rho)=R-(1+\rho)H(\rho/(1+\rho)) +\log ((x+\sqrt{(1+2\rho)^2x^2-4\rho(1+\rho)})/2)
\\
-(1+2\rho)\log \left(\frac{(1+2\rho)x+\sqrt{(1+2\rho)^2x^2-4\rho(1+\rho)}}{2(1+\rho)}\right).
\end{multline}
In Section \ref{sec:MAC}, the relation of these bounds with the two-user setting are analyzed and in Section \ref{sec:numerical}, their performances are numerically compared. It is worth mentioning that, unlike Shannon's results, the rate that minimizes $E_{\mbox{\tiny ABL}}(R,A)$ cannot be derived analytically. 
\section{Multiple-Access Channel \label{sec:MAC}}

In order to derive outer bounds for the two--user modulation--estimation problem, we consider the following auxiliary hypothesis testing problem, in analogy to the technique used for the single--user case:
 \begin{equation}
 \mathcal{H}_{i_1,i_2} : \mathbf{y} = \mathbf{x}_1(u_1+i_1{\Delta}_1) + \mathbf{x}_2(u_2+i_2{\Delta}_2) + \mathbf{z}, 
 \end{equation} for $i_1 \in \{1,\cdots,M_1\} $ and $i_2 \in \{1,\cdots,M_2\} $, 
where $u_1 \in \left[0,1-(M_1-1)\Delta_1 \right)$, $u_2 \in \left[0,1-(M_2-1)\Delta_2 \right)$. Both $u_1$ and $u_2$ are known to the receiver. As in the single--user case, we will derive two types of results. The first corresponds to fixed values of $M_1$ and $M_2$ (and $\Delta_1$, $\Delta_2$), which will yield non-asymptotic results on the MSE's themselves. The second type of results refers to the asymptotic regime of large $N$, where $M_1$ and $M_2$ are allowed to grow exponentially with $N$, at arbitrary rates to be optimized, and our asymptotic results concern the asymptotic exponential rates of the two MSE's. 

\subsection{Outer bounds on the region of achievable MSE pairs \label{subsec:dual_zr_linear}}
We denote the conditional probability of error as a function of $(u_1,u_2)$ by 
\begin{equation}
P_{e}(u_1,u_2,\Delta_1,\Delta_2)=\Pr\left \{(\hat{i}_1,\hat{i}_2)\neq(i_1,i_2) | u_1,u_2\right\}
\end{equation}
where the overall probability of error is $P_{e}=\int_{u_1}du_1p(u_1)\int_{u_2}du_1p(u_2)P_{e}(u_1,u_2,\Delta_1,\Delta_2)$ with $p(.)$, $\hat{i}_1$ and $\hat{i}_2$ being the probability density function, the estimates of $i_1$ and $i_2$, respectively. As noted in Section \ref{sec:model_2}, the results in this paper are presented for the case where the sources are uniformly distributed over $[0,1)$ and the adaptation to other choices of source distributions is straightforward.
A lower bound on $P_{e}(u_1,u_2)$ will now be derived by generalizing Shannon's zero-rate lower bound for the Gaussian MAC. The overall probability of error for this channel can be decomposed into three terms as follows:
\begin{equation}
P_{e}= \Pr\left(\hat{i}_1\neq i_1, \hat{i}_2=i_2 \right)+ \Pr\left(\hat{i}_1= i_1, \hat{i}_2\neq i_2 \right) + \Pr\left(\hat{i}_1\neq i_1, \hat{i}_2\neq i_2 \right)  \label{eq:Pe-MAC1} 
\end{equation}
Here we need a two--user counterpart of $L_B(\Delta)$ (\ref{eq:LB}) which depends on two parameters, $\Delta_1,\Delta_2$ for $U_1,U_2$, respectively, that is
$$\Pr\{|\hat{U}_1(\mathbf{y})-U_1|>\Delta_1/2~\textrm{or}~|\hat{U}_2(\mathbf{y})-U_2|>\Delta_2/2\}\geq
L_B(\Delta_1,\Delta_2),$$
with the l.h.s.\ being further upper bounded using the union bound, to yield
\begin{equation} \label{eq:sum_prob}
\Pr\{|\hat{U}_1(\mathbf{y})-U_1|>\Delta_1/2\}+\Pr\{|\hat{U}_2(\mathbf{y})-U_2|>\Delta_2/2\}\geq L_B(\Delta_1,\Delta_2).
\end{equation} The lower bound $L_B(\Delta_1,\Delta_2)$ is to be specified later. 
Using considerations similar to those of the derivation in (\ref{eq:Merhav13mod}), one obtains
\begin{multline}
\int_{0}^{1-(M_1-1)\Delta_1} du_1p(u_1)\int_{0}^{1-(M_2-1)\Delta_2}du_2p(u_2)P_e(u_1,u_2,\Delta_1,\Delta_2)
\leq
\\
\frac{\left(\Pr\left\{|\hat{U}_1(\mathbf{y})-U_1|>\Delta_1/2 \right\}+\Pr\left\{|\hat{U}_2(\mathbf{y})-U_2|> \Delta_2 /2\right\}\right)}{\left\lceil 1/\Delta_1\right\rceil \left\lceil 1/\Delta_2\right\rceil}. \label{Pe_ind_MAC1}
\end{multline} The l.h.s.\ of (\ref{Pe_ind_MAC1}) is obtained by introducing the condition of $M_j\Delta_j>1$, which is equivalent to $M_j=\left\lceil 1/\Delta_j\right\rceil$, for $j=1,2$. We note that (\ref{Pe_ind_MAC1}) is valid for all $M_j$ and $\Delta_j$ such that $(M_j-1)\Delta_j<1$. A detailed derivation of (\ref{Pe_ind_MAC1}) can be found in Appendix \ref{subsec:sum_MSE_improved}.
Combining (\ref{eq:sum_prob}) and (\ref{Pe_ind_MAC1}) with (\ref{eq:Pzr}), we finally have
\begin{align}
L_B(\Delta_1,\Delta_2) \dfn &\left\lceil 1/\Delta_1\right\rceil \left\lceil 1/\Delta_2\right\rceil\left(1+\Delta_1-\left\lceil 1/\Delta_1\right\rceil\Delta_1\right)
 \left(1+\Delta_2-\left\lceil 1/\Delta_2\right\rceil \Delta_2 \right) \nonumber \\
&P_{ZR}\left(\mathcal{E}_1,\mathcal{E}_2,\left\lceil 1/\Delta_1\right\rceil,\left\lceil 1/\Delta_2\right\rceil\right). \label{Pe_ind_MAC2}
\end{align} 
\subsubsection{Shannon's zero-rate bound adapted to the MAC}
Shannon's bound is based on first upper bounding the average squared Euclidean distance between all pairs of modulated signals and this should be carried out for each of the three terms of eq. (\ref{eq:Pe-MAC1}). In the first term in (\ref{eq:Pe-MAC1}) there are $M_{1}(M_{1}-1)/2$ possible signal pairs, and so, the average squared Euclidean distance between all such pairs is upper bounded by
\begin{equation}
D_1^2 (u_1,u_2) \leq \frac{2M_{1} \mathcal{E}_1}{(M_{1}-1)} \label{eq:Dmin1}
\end{equation}
Similarly, for the second term of (\ref{eq:Pe-MAC1}),
\begin{equation}
 D_2^2 (u_1,u_2)  \leq\frac{2M_{2}\mathcal{E}_2}{(M_{2}-1)} \label{eq:Dmin2}
\end{equation} with $M_{2}(M_{2}-1)/2$ signal pairs of user 2. 
For the third term, there are $M_1M_{2}(M_1-1)(M_{2}-1)$ possible pairs that differ in both indices, so that
\begin{equation}
D_{12}^2 (u_1,u_2)  \leq\frac{2 M_1 \mathcal{E}_1}{(M_1-1)} +\frac{2 M_{2} \mathcal{E}_2}{(M_{2}-1)} \label{eq:Dmin12}
\end{equation} The reader is referred to Appendix \ref{subsec:euclid} for a detailed derivation of eqs.\ (\ref{eq:Dmin2})-(\ref{eq:Dmin12}). 
By progressively removing points at the average distance as in \cite[eq.\ (81)]{Shannon59}, we obtain the overall bound as follows.
\begin{align}
P_e (u_1,u_2,\Delta_1,\Delta_2)& \geq P_{ZR}^{\mathrm{Shannon}}(\mathcal{E}_1,\mathcal{E}_2,M_1,M_2) \notag\\
                    &= \frac{1}{M_1}\sum_{m=2}^{M_{1}}Q\left(\sqrt{\frac{m}{m-1}\frac{\mathcal{E}_1}{2\sigma^2}}\right) +  \frac{1}{M_2}\sum_{m=2}^{M_{2}}Q\left(\sqrt{\frac{m}{m-1}\frac{\mathcal{E}_2}{2\sigma^2}}\right) \notag \\
										&+ \frac{1}{M_1M_2}\sum_{m_1=2}^{M_1}\sum_{m_2=2}^{M_2}Q\left(\sqrt{\frac{m_1}{m_1-1}\frac{\mathcal{E}_1}{2\sigma^2}+\frac{m_2}{m_2-1}\frac{\mathcal{E}_2}{2\sigma^2}}\right) \label{eq:Pzr}
\end{align}
\subsubsection{An alternative zero-rate bound}
In the proof of Theorem 4 from \cite{Weng}, the authors showed that the overall error probability (\ref{eq:Pe-MAC1}) of a two-user Gaussian MAC with codebooks $\mathcal{C}_1$ and $\mathcal{C}_2$ is lower bounded by the error probability of the single-user code $\mathcal{C}_1+\mathcal{C}_2$ under an average power constraint. In our case, the resulting lower bound using an average power constraint is still valid since a peak energy/power constraint can only increase the error probability. Note that in our case the sum codebook has energy $\mathcal{E}_1+\mathcal{E}_2$ and cardinality $M_1 M_2$. The error probability of the sum codebook can then be lower bounded by (\ref{eq:Poly_single_zero}) using $\mathcal{E}_1+\mathcal{E}_2$ and $M_1 M_2$ for $\mathcal{E}$ and $M$. Including the single-user lower bounds for each user, the overall bound on the zero rate error probability is the maximum of three functions as
\begin{equation} \label{eq:Pzr_poli}
P_{ZR}^{\mathrm{P}}(\mathcal{E}_1,\mathcal{E}_2,M_1,M_2)= \max \left\{ P_\mathrm{ZR}^{\mathrm{P}}\left(\mathcal{E}_1,M_1\right), P_\mathrm{ZR}^{\mathrm{P}}\left(\mathcal{E}_2,M_2\right), P_\mathrm{ZR}^{\mathrm{P}}\left(\mathcal{E}_1+\mathcal{E}_2,M_1 M_2\right)\right\}
\end{equation} where $P_\mathrm{ZR}^{\mathrm{P}}\left(\mathcal{E},M\right)$ is given by (\ref{eq:Poly_single_zero}).
 
In the next theorem, we state the first main result for the two-user setting.
\begin{theorem} \label{theorem_2}
For arbitrary modulators $\mathbf{x}_j(U_j),\; {j=1,2}$, transmitting subject to power limitations, $\mathcal{S}_1$ and $\mathcal{S}_2$, respectively, over the two--user Gaussian MAC (\ref{eq:MAC_def}), the following inequalities hold
\begin{eqnarray} 
{\mathrm{MSE}}_1 &\geq &\max\left({\mathrm{MSE}}_{\mathrm{s},1},\max_{0<\theta\leq 1}\left(C_1(\theta)/2-\frac{{\mathrm{MSE}}_{2}}{\theta^2}\right), \max_{0<\theta\leq 1} \theta^2 \left(C_{2}(\theta)/2- {\mathrm{MSE}}_{2}\right)\right), \label{eq:twouser_MSE_imp} \\
{\mathrm{MSE}}_2 &\geq &\max\left({\mathrm{MSE}}_{\mathrm{s},2},\max_{0<\theta\leq 1}\left(C_2(\theta)/2-\frac{{\mathrm{MSE}}_{1}}{\theta^2} \right), \max_{0<\theta\leq 1} \theta^2 \left(C_{1}(\theta)/2- {\mathrm{MSE}}_{1}\right)\right), \label{eq:twouser_MSE_imp_2}
\end{eqnarray}
where $\mathrm{MSE}_{\mathrm{s},j}$ denotes the lower bound on the MSE in estimating the parameter $U_j$, $j=1,2$, in the single--user case (or equivalently, when the other parameter is known), given by (\ref{merhav-mod-lhs}), with 
\begin{eqnarray}
C_1(\theta)&=&\int_0^1d\Delta\cdot\Delta\cdot L_B(\Delta,\theta\Delta)\notag \\
C_2(\theta)&=&\int_0^1d\Delta\cdot\Delta\cdot L_B(\theta\Delta,\Delta) \notag
\end{eqnarray}
and $L_B(.,.)$ is given by (\ref{Pe_ind_MAC2}). 
\end{theorem}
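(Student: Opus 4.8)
First I would establish the three lower bounds that make up the maximum in (\ref{eq:twouser_MSE_imp}) one at a time and then take their maximum; the companion inequality (\ref{eq:twouser_MSE_imp_2}) will follow by symmetry, interchanging the indices $1$ and $2$ throughout. For the two bounds that couple the two MSE's, the starting point is inequality (\ref{eq:sum_prob}), $\Pr\{|\hat{U}_1(\mathbf{y})-U_1|>\Delta_1/2\}+\Pr\{|\hat{U}_2(\mathbf{y})-U_2|>\Delta_2/2\}\ge L_B(\Delta_1,\Delta_2)$, with $L_B$ as in (\ref{Pe_ind_MAC2}), which has already been established for all admissible $(\Delta_1,\Delta_2)$.

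To obtain the term involving $C_1(\theta)$, I would fix $\theta\in(0,1]$, set $\Delta_1=\Delta$ and $\Delta_2=\theta\Delta$ in (\ref{eq:sum_prob}), multiply through by $\Delta$, and integrate over $\Delta\in[0,1]$; the right-hand side then becomes exactly $C_1(\theta)$. On the left-hand side, the first integral is handled precisely as in (\ref{eq:mse_derv}): a change of variable together with the identity (\ref{eq:MSE_def}) gives $\int_0^1 d\Delta\,\Delta\,\Pr\{|\hat{U}_1-U_1|>\Delta/2\}\le 2\,\mathrm{MSE}_1$. For the second integral I would substitute $\Delta'=\theta\Delta$, which gives $\theta^{-2}\int_0^{\theta} d\Delta'\,\Delta'\,\Pr\{|\hat{U}_2-U_2|>\Delta'/2\}$; here the hypothesis $\theta\le 1$ is used so that the integration range stays inside $[0,1]$, and then, the integrand being non-negative, this is at most $\theta^{-2}\int_0^{1} d\Delta'\,\Delta'\,\Pr\{|\hat{U}_2-U_2|>\Delta'/2\}\le 2\theta^{-2}\,\mathrm{MSE}_2$. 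Combining the three estimates yields $2\,\mathrm{MSE}_1+2\theta^{-2}\mathrm{MSE}_2\ge C_1(\theta)$, i.e. $\mathrm{MSE}_1\ge C_1(\theta)/2-\mathrm{MSE}_2/\theta^2$, and maximizing over $\theta\in(0,1]$ gives the second entry of the maximum. Running the identical argument with the substitution $\Delta_1=\theta\Delta$, $\Delta_2=\Delta$ instead yields $2\theta^{-2}\mathrm{MSE}_1+2\,\mathrm{MSE}_2\ge C_2(\theta)$, hence $\mathrm{MSE}_1\ge\theta^2\left(C_2(\theta)/2-\mathrm{MSE}_2\right)$, which after maximizing over $\theta$ is the third entry.

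For the remaining term $\mathrm{MSE}_1\ge\mathrm{MSE}_{\mathrm{s},1}$ I would use a genie argument. Conditioning on $U_2=u_2$, a receiver that is given $u_2$ can form $\mathbf{y}'=\mathbf{y}-\mathbf{x}_2(u_2)=\mathbf{x}_1(U_1)+\mathbf{z}$, which is exactly the single-user AWGN model (\ref{eq:AWGN_single}) with modulator $\mathbf{x}_1(\cdot)$ obeying the energy constraint $\mathcal{E}_1$; the map $\mathbf{y}'\mapsto\hat{U}_1(\mathbf{y}'+\mathbf{x}_2(u_2))$ is a legitimate single-user estimator, so by (\ref{merhav-mod-lhs}), applied with energy $\mathcal{E}_1$, its mean-square error is at least $\mathrm{MSE}_{\mathrm{s},1}$. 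Hence $\mathbf{E}\left[(\hat{U}_1(\mathbf{y})-U_1)^2\,\big|\,U_2=u_2\right]\ge\mathrm{MSE}_{\mathrm{s},1}$ for every $u_2$, and averaging over $U_2$ gives $\mathrm{MSE}_1\ge\mathrm{MSE}_{\mathrm{s},1}$. Taking the maximum of the three bounds proves (\ref{eq:twouser_MSE_imp}), and (\ref{eq:twouser_MSE_imp_2}) follows by the same computation with the two users exchanged.

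The hard part is not conceptual but a matter of care: getting the two changes of variable right, and in particular noticing that the constraint $\theta\le 1$ is exactly what lets the rescaled tail integral be bounded by a multiple of the second moment rather than by an uncontrolled higher moment of $|\hat{U}_j-U_j|$. One should also check that $C_1(\theta)$ and $C_2(\theta)$ are finite, which is the case because the factors $\left(1+\Delta_j-\lceil 1/\Delta_j\rceil\Delta_j\right)$ appearing in $L_B$ are $O(\Delta_j)$ and thus compensate the $\Theta(\Delta^{-2})$ growth of $\lceil 1/\Delta_1\rceil\lceil 1/\Delta_2\rceil$ as $\Delta\to 0$, leaving an integrand that is $O(\Delta)$ near the origin.
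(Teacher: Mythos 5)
Your proposal is correct and follows essentially the same route as the paper's own proof: the same substitution $\Delta_1=\Delta$, $\Delta_2=\theta\Delta$ into (\ref{eq:sum_prob}), the same weighted integration against $\Delta$ to produce $C_1(\theta)$ and the bound $\mathrm{MSE}_1+\mathrm{MSE}_2/\theta^2\ge C_1(\theta)/2$, the same rearrangements and user interchange for the $C_2$ terms, and the same genie-aided argument for $\mathrm{MSE}_{\mathrm{s},j}$. The only differences are that you spell out the change of variables, the genie reduction, and the finiteness of $C_j(\theta)$ in more detail than the paper does.
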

\begin{proof}
Let $\theta$ be an arbitrary parameter, taking on values in $[0,1]$, and for a given $\Delta$, set $\Delta_1=\Delta$ and $\Delta_2=\theta \Delta$. Now, by integrating both sides of (\ref{eq:sum_prob}) 
w.r.t.\ $\Delta$ we have
\begin{equation} \label{eq:Sum_prob2}
\int_{0}^{1}\mbox{d}\Delta\cdot\Delta\left(\Pr\{|\hat{U}_1(\mathbf{y})-U_1|>\Delta/2\}+\Pr\{|\hat{U}_2(\mathbf{y})-U_2|>\theta\Delta/2\}\right) \geq C_1(\theta).
\end{equation} For the derivation of $C_1(\theta)$, the reader is referred to Appendix \ref{app_ctheta}. 
Now, the first term on the l.h.s.\ is upper bounded by $2\mathbf{E}[\hat{U}_1(\mathbf{y})-U_1]^2$. As for the second term, similarly, we get $$\int_0^1d\Delta\cdot\Delta\cdot\Pr\{|\hat{U}_2(\mathbf{y})-U_2|>\theta \Delta/2\} \leq\frac{2}{\theta^2}\cdot\mathbf{E}[\hat{U}_2(\mathbf{y})-U_2]^2.$$ 
Combining this with (\ref{eq:Sum_prob2}), we readily obtain
\begin{equation}
\label{2nd}
\mathrm{MSE}_1+\frac{\mathrm{MSE}_2}{\theta^2}\geq\frac{C_1(\theta)}{2}
\end{equation} or equivalently,
\begin{equation}
\mathrm{MSE}_1\geq \frac{C_1(\theta)}{2}-\frac{\mathrm{MSE}_2}{\theta^2}.
\end{equation} 
Since this inequality holds true for every $\theta\in[0,1]$, the tightest bound of this form is obtained by
maximizing the r.h.s.\ over $\theta$ in this interval, which yields
\begin{equation}
\mathrm{MSE}_1\geq \max_{0\le\theta\le 1}\left[\frac{C_1(\theta)}{2}-\frac{\mathrm{MSE}_2}{\theta^2}\right].
\end{equation}
We also observe that the single--user bound $\mathrm{MSE}_1\ge \mathrm{MSE}_{\mathrm{s},j}$ trivially holds
since it is equivalent to a ``genie-aided'' scenario, 
where user no.\ 1 is fully informed on the exact value of $U_2$.

The equivalence of (\ref{2nd}) using $C_1(\theta)$ could be given also for user 2 as
\begin{equation}
\label{3rd}
\theta^2 \mathrm{MSE}_1+\mathrm{MSE}_2\geq \theta^2\frac{C_1(\theta)}{2}.
\end{equation}
By the same token, eq.\ (\ref{3rd}) implies that
\begin{equation}
\mathrm{MSE}_2\geq \max_{0\le\theta\le 1}\theta^2\left[\frac{C_1(\theta)}{2}-\mathrm{MSE}_1\right].
\end{equation}
To obtain the remaining bounds, interchange the roles of the two users, which amounts to the use of $C_2(\theta)$. This completes the proof of Theorem 2.
\end{proof}

In Section \ref{sec:numerical} we present numerical evaluation results of (\ref{eq:twouser_MSE_imp})-(\ref{eq:twouser_MSE_imp_2}) for different values of $\theta$ and SNR.

\subsection{Upper Bounds on the MSE exponents \label{sec:nonzero}}

In this subsection, we modify the bounds presented in Theorem \ref{theorem_2} in order to obtain upper bounds of the achievable region of the MSE exponents defined as in (\ref{eq:expo_mac}). The core idea is to pass from the zero--rate bound of the previous subsection, where $M_1$ and $M_2$ were fixed (independent of $N$), to positive rate bounds, where $M_1=e^{NR_1}$ and $M_2=e^{NR_2}$, $R_1$ and $R_2$ being subjected to optimization.
Our main result, in this subsection, is asserted in the following theorem.
\begin{theorem}
For arbitrary $N$-dimensional parameter modulators $\mathbf{x}_j(U_j), \; j=1,2$ transmitting subject to power constraints given by (\ref{eq:enr}) across the two--user Gaussian MAC (\ref{eq:MAC_def}), 
the MSE exponents are bounded by 
\begin{eqnarray}\label{eq:Falpha}
\epsilon_1 &\leq & \min \left\{\epsilon_{\mathrm{s},1},\inf_{\alpha:~F(\alpha)+2\alpha\geq \epsilon_{2}}F(\alpha), \inf_{\alpha:~G(\alpha)\geq \epsilon_2}G(\alpha) +2\alpha \right\} \\
\epsilon_2 &\leq & \min \left\{\epsilon_{\mathrm{s},2},\inf_{\alpha:~G(\alpha)+2\alpha\geq \epsilon_{1}}G(\alpha), \inf_{\alpha:~F(\alpha)\geq \epsilon_1}F(\alpha) +2\alpha \right\} \label{eq:Falpha'}
\end{eqnarray} where 
\begin{eqnarray}
F(\alpha) &\dfn &  \min_R[E_{u}(R,R+\alpha)+2R]\}, \notag \\
G(\alpha) &\dfn & \min_R[E_{u}(R+\alpha,R)+2R]\} \notag
\end{eqnarray} 
and $E_{u}(R_1,R_2)$ and $\epsilon_{\mathrm{s},j}$ denote any upper bound on the reliability function of the two--user Gaussian MAC and the single-user bound on the MSE exponent in estimating the parameter $U_j$, $j=1,2$, given by Theorem \ref{t_single_nonzero}, respectively.
\end{theorem}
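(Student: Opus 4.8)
The plan is to mimic the derivation of Theorem~\ref{theorem_2}, but replace the fixed-$M$ zero-rate bound $P_{ZR}$ with a positive-rate bound and replace the explicit integration of $C_1(\theta)$ by a saddle-point (Laplace-type) evaluation of the resulting integral. First I would start from inequality~(\ref{eq:sum_prob}), now read with $M_1=e^{NR_1}$, $M_2=e^{NR_2}$, i.e.\ $\Delta_1=e^{-NR_1}$ and $\Delta_2=e^{-NR_2}$. As in the single-user Theorem~\ref{t_single_nonzero}, I would tie $\Delta_1$ and $\Delta_2$ together by setting $\Delta_1=\Delta$, $\Delta_2=\theta\Delta$ with $\theta=e^{-N\alpha}$ for a fixed offset parameter $\alpha\ge 0$; then $R_1=R$ and $R_2=R+\alpha$ for the common free variable $R=-\tfrac1N\log\Delta$. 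Integrating~(\ref{eq:sum_prob}) against $\Delta\,d\Delta$ over $[0,1]$ and changing variables to $R$ exactly as in~(\ref{eq:single_nonzero}) gives
\begin{equation}
\mathrm{MSE}_1+\frac{\mathrm{MSE}_2}{\theta^2}\;\ge\;\frac{N}{2}\int_0^\infty dR\cdot e^{-2RN}\,L_B\!\left(e^{-NR},e^{-N(R+\alpha)}\right),
\end{equation}
where the two terms on the left come from upper-bounding $\int_0^1 d\Delta\,\Delta\,\Pr\{|\hat U_1-U_1|>\Delta/2\}\le 2\,\mathrm{MSE}_1$ and $\int_0^1 d\Delta\,\Delta\,\Pr\{|\hat U_2-U_2|>\theta\Delta/2\}\le (2/\theta^2)\,\mathrm{MSE}_2$, precisely the steps already used below~(\ref{eq:Sum_prob2}).

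Next I would evaluate the exponential order of the right-hand side. Here $L_B(\Delta_1,\Delta_2)$ from~(\ref{Pe_ind_MAC2}) is, up to polynomial prefactors $\lceil1/\Delta_j\rceil(1+\Delta_j-\lceil1/\Delta_j\rceil\Delta_j)$ that are subexponential, equal to $P_{ZR}(\mathcal{E}_1,\mathcal{E}_2,e^{NR},e^{N(R+\alpha)})$; replacing the zero-rate bound by the positive-rate reliability-function bound, this is of exponential order $e^{-N E_u(R,R+\alpha)}$, where $E_u(R_1,R_2)$ is any upper bound on the reliability function of the two-user Gaussian MAC — this is exactly the passage from fixed $M$ to $M=e^{NR}$ announced in the text, justified because a sphere-packing/Shannon-type converse on $P_e$ holds for every code at the given rate pair, and the decomposition~(\ref{eq:Pe-MAC1}) together with the sum-codebook argument of \cite{Weng} feeds straight into it. Then the integrand has exponential order $\exp\{-N[2R+E_u(R,R+\alpha)]\}$, and by the Laplace method the integral is of order $\exp\{-N\min_R[2R+E_u(R,R+\alpha)]\}=e^{-NF(\alpha)}$. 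Taking $-\tfrac1N\log$ of both sides, letting $N\to\infty$, and noting $\theta^2=e^{-2N\alpha}\to$ contributes an additive $2\alpha$ on the exponent side, I get
\begin{equation}
\min\!\left\{\epsilon_1,\;\epsilon_2-2\alpha\right\}\;\le\;F(\alpha)\qquad\text{equivalently}\qquad \max\{\epsilon_1,\epsilon_2-2\alpha\}\ \text{is constrained by}\ F(\alpha),
\end{equation}
because $-\tfrac1N\log(\mathrm{MSE}_1+\theta^{-2}\mathrm{MSE}_2)\to\min\{\epsilon_1,\epsilon_2-2\alpha\}$. Unwinding: if $\epsilon_2-2\alpha\ge\epsilon_1$, i.e.\ $F(\alpha)+2\alpha\ge\epsilon_2$ is the relevant regime after using $\epsilon_1\le F(\alpha)$, one concludes $\epsilon_1\le F(\alpha)$; this yields the term $\inf_{\alpha:\,F(\alpha)+2\alpha\ge\epsilon_2}F(\alpha)$. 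Symmetrically, rewriting the same inequality as $\theta^2\mathrm{MSE}_1+\mathrm{MSE}_2\ge\tfrac12\theta^2 C_1$-analogue (the positive-rate version of~(\ref{3rd})) and taking exponents gives $\min\{\epsilon_1+2\alpha,\epsilon_2\}\le G$-type bound, producing the term $\inf_{\alpha:\,G(\alpha)\ge\epsilon_2}G(\alpha)+2\alpha$; and the single-user genie bound $\epsilon_1\le\epsilon_{\mathrm{s},1}$ is immediate from Theorem~\ref{t_single_nonzero} applied with $U_2$ known. Interchanging the roles of the two users (which swaps $F\leftrightarrow G$) gives~(\ref{eq:Falpha'}), completing the proof.

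The routine parts are the change of variables and the polynomial-prefactor bookkeeping, which are literally the single-user computations of~(\ref{eq:single_nonzero}) and the MAC computations of~(\ref{eq:Sum_prob2}) carried over verbatim. The main obstacle, and the step I would be most careful about, is the correct handling of the constrained infima: one must track which of the two competing terms $\epsilon_1$ versus $\epsilon_2-2\alpha$ (respectively $\epsilon_1+2\alpha$ versus $\epsilon_2$) achieves the minimum in $-\tfrac1N\log(\cdot)$, since the bound on $\epsilon_1$ is only informative when $\epsilon_1$ is the smaller of the two — this is exactly why the infima in~(\ref{eq:Falpha}) are restricted to the sets $\{\alpha:F(\alpha)+2\alpha\ge\epsilon_2\}$ and $\{\alpha:G(\alpha)\ge\epsilon_2\}$ rather than ranging over all $\alpha\ge0$. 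A secondary technical point is to confirm that $E_u(R_1,R_2)$ being merely an \emph{upper} bound on the reliability function (not the function itself) suffices: it does, because the chain $P_e\ge P_{ZR}\to$ sphere-packing only ever needs $P_e$ bounded below by $e^{-N E_u}$, and any valid converse exponent is admissible here, just as in Theorem~\ref{t_single_nonzero}.
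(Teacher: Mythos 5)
Your proposal is correct and follows essentially the same route as the paper: substitute $\Delta=e^{-RN}$ and $\theta=e^{-\alpha N}$ into the finite-$N$ inequalities (\ref{2nd}) and (\ref{3rd}), evaluate the resulting integral by the Laplace method to obtain the exponential order $e^{-NF(\alpha)}$ with $F(\alpha)=\min_R[2R+E_u(R,R+\alpha)]$, read off $\min\{\epsilon_1,\epsilon_2-2\alpha\}\le F(\alpha)$, unwind this into the constrained infima, add the genie-aided single-user term, and interchange the users for $\epsilon_2$. The one place where your write-up is looser than the paper's is the unwinding step: the paper makes the branch selection explicit through the convex-combination/sup-over-$\lambda$ identity leading to (\ref{eps_1_general}), whereas your justification of which term of the min is active is stated somewhat circularly (it presupposes $\epsilon_1\le\epsilon_2-2\alpha$), though it arrives at the same constraint sets.
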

\begin{proof}
Substituting $\Delta=e^{-RN}$ and $\theta=e^{-\alpha N}$ into (\ref{2nd}) and
changing the integration variable on the
r.h.s. of (\ref{2nd}) 
to $R$, we obtain
\begin{equation}
\mathrm{MSE}_1+e^{2\alpha N}\mathrm{MSE}_2 \geq \frac{N}{2}\int_0^\infty dR\cdot e^{-2RN}\cdot
L_B(e^{-RN},e^{-(R+\alpha) N}) \label{eq:bir}
\end{equation}
By the Laplace integration method \cite{Laplace} the r.h.s.\ of (\ref{eq:bir}) is of the exponential order of
$\exp\{-N\min_R[E_{u}(R,R+\alpha)+2R]\}=\exp\{-NF(\alpha)\}$.
The l.h.s.\ is of the exponential order of
$\exp\{\min\{\epsilon_1,\epsilon_2-2\alpha\}\}$. Thus, we obtain
\begin{equation}
\min\{\epsilon_1,\epsilon_2-2\alpha\}\leq F(\alpha)~~~~~\forall \alpha\geq 0.
\end{equation}
In other words, for every $\alpha\geq 0$, there exists $\lambda\in[0,1]$ such that
$\lambda\epsilon_1+(1-\lambda)(\epsilon_2-2\alpha)\leq F(\alpha)$
or equivalently: 
\begin{equation} \label{eps_1_general}
\epsilon_1\leq
\inf_{\alpha\geq 0}\sup_{0\leq \lambda \leq 1}
\frac{F(\alpha)+(1-\lambda)(\epsilon_2-2\alpha)}{\lambda}=\inf_{\alpha:~F(\alpha)+2\alpha\geq \epsilon_2}F(\alpha).
\end{equation}
Substituting $\Delta=e^{-RN}$ and $\theta=e^{-\alpha N}$ into (\ref{3rd}) and
changing the integration variable on the r.h.s.  
to $R$, we get $\max\{\epsilon_1-2\alpha,\epsilon_2\}\leq G(\alpha)~,\forall \alpha\geq 0$ that yields the following bound on $\epsilon_1$ as 
\begin{equation} \label{eps_1_general_1}
\epsilon_1\leq
\inf_{\alpha\geq 0}\sup_{0\leq \lambda \leq 1}
\left(\frac{G(\alpha)+(1-\lambda)\epsilon_2}{\lambda}+2\alpha\right)=\inf_{\alpha:~G(\alpha)\geq \epsilon_2}G(\alpha)+2\alpha.
\end{equation}  The overall bound on $\epsilon_1$ is the maximum of the three bounds given by (\ref{eps_1_general}), (\ref{eps_1_general_1}) 
and the bound on the single--user MSE exponent given by (\ref{eq:expo_1}).
The bound to $\epsilon_2$ is obtained in the very same manner.
\end{proof}
For the purpose of numerical evaluation, we will study three different alternatives for $E_u(R_1,R_2)$ to be used in bounding the MSE exponents (\ref{eq:Falpha})-(\ref{eq:Falpha'}) assuming equal energy on both transmitters, i.e. $S_1=S_2=S$. Clearly, equal energy on both users will result in the same exponent $F(\alpha)$ (or $G(\alpha)$). 

\subsubsection{Divergence bound}

$E_{u}(R_1,R_2)$ is chosen as the sphere-packing bound of \cite{Nazari-thesis}, taking the auxiliary channel $W$ to be a Gaussian MAC with noise
variance $\sigma_w^2$. For inputs of powers as defined by (\ref{eq:enr}), the rate region of the auxiliary Gaussian MAC $W$ is given by
\begin{eqnarray}
R_j &\leq& \frac{1}{2}\log\left(1+\frac{\mathcal{S}}{\sigma_w^2}\right)\\
R_1+R_2 &\leq & \frac{1}{2}\log\left(1+\frac{2\mathcal{S}}{\sigma_w^2}\right),
\end{eqnarray}
which implies that for $W$ to exclude $(R_1,R_2)$ from the achievable region, 
\begin{equation}
\sigma_w^2\geq \min\left\{\frac{\mathcal{S}}{e^{2R_1}-1},
\frac{\mathcal{S}}{e^{2R_2}-1},
\frac{2\mathcal{S}}{e^{2(R_1+R_2)}-1}\right\}\dfn \sigma_0^2(R_1,R_2),
\end{equation}
and its assumed that $\sigma_0^2(R_1,R_2) > \sigma^2$. Thus, 
\begin{eqnarray}
E_{\mbox{\tiny sp}}(R_1,R_2)&=&\frac{1}{2}\left[\frac{\sigma_0^2(R_1,R_2)}{\sigma^2}-\ln\left(
\frac{\sigma_0^2(R_1,R_2)}{\sigma^2}\right)-1\right]\nonumber\\
&=&\min\{D(R_1,\mathcal{S}),D(R_2,\mathcal{S}),D(R_1+R_2,2\mathcal{S})\},
\end{eqnarray}
where the divergence function is defined using \cite[eq. (5.27)]{Nazari-thesis} as
\begin{equation}
D(R,\mathcal{S})\dfn
\frac{1}{2}\left[\frac{\mathcal{S}}{\sigma^2(e^{2R}-1)}-\ln\left(\frac{\mathcal{S}}{\sigma^2(e^{2R}-1)}\right)-1\right].
\end{equation} The derivation of the upper bound on the sphere-packing bound $E_{\mbox{\tiny sp}}(R_1,R_2)$ for the Gaussian MAC can be found in Appendix \ref{subsec:GMAC_Esp}. 
We first need to calculate
\begin{eqnarray}
F(\alpha)&=&\inf_{R>0}\left\{2R+E_{\mbox{\tiny sp}}(R,R+\alpha)\right\}\nonumber\\
&=&\inf_{R>0}\left\{2R+\frac{1}{2}\left[\frac{\sigma_0^2(R,R+\alpha)}{\sigma^2}-\ln\left(
\frac{\sigma_0^2(R,R+\alpha)}{\sigma^2}\right)-1\right]\right\}\nonumber\\
&=&\min\{F_1,F_2(\alpha),F_{12}(\alpha)\},
\end{eqnarray}
with
\begin{eqnarray} \label{eq:F_functions_div1}
F_1&=&\inf_{R\geq 0}[2R+D(R,\mathcal{S})]\\
F_2(\alpha)&=&\inf_{R\geq 0}[2R+D(R+\alpha,\mathcal{S})]  \label{eq:F_functions_div2}\\
F_{12}(\alpha)&=&\inf_{R\geq 0}[2R+D(2R+\alpha,2\mathcal{S})].  \label{eq:F_functions_div3}
\end{eqnarray}
The channel rates that minimize the three exponents $F_1$, $F_2(\alpha)$ and $F_{12}(\alpha)$ given by (\ref{eq:F_functions_div1})-(\ref{eq:F_functions_div3}) are denoted respectively by $R_1^*$, $R_2^*$ and $R_{12}^*$ that are derived and given in detail in Appendix \ref{subsec:expo_deriv}.
Using these rate functions we can reformulate the minimum functions $F_1^*$, $F_2^*(\alpha)$ and $F_{12}^*(\alpha)$ as functions of  $R_1^*$, $R_2^*$ and $R_{12}^*$, respectively.
Considering the constraint in (\ref{eq:Falpha}), we choose the $\alpha$ satisfying
\begin{equation} \label{eq:esp2_divergence}
\epsilon_2 \leq \min\{F_1^*,F_2^*(\alpha),F_{12}^*(\alpha)\}+2\alpha.
\end{equation}
The constraint $\epsilon_2 \leq F_1^*+2\alpha$ yields
\begin{equation}
\alpha \leq \frac{F_1^*-\epsilon_2}{2}\dfn\alpha_1(\epsilon_2).
\end{equation}
The constraint $\epsilon_2\leq F_2^*(\alpha)+2\alpha$ gives no requirement
concerning $\alpha$, it is simply the single-user bound for user 2.
For the two-user component $\epsilon_2\leq F_{12}^*(\alpha)+2\alpha$ we get
\begin{equation}
\alpha\leq \frac{1}{2}(F_{12}^*(\alpha)-\epsilon_2)\dfn \alpha_2(\epsilon_2).
\end{equation}
Thus, the constraint becomes
\begin{equation}
\alpha\leq\alpha^*(\epsilon_2)\dfn\max\{\alpha_1(\epsilon_2),\alpha_2(\epsilon_2)\}
\end{equation}
resulting in the overall bound
\begin{eqnarray}  \label{eq:esp1_divergence}
\epsilon_1&\leq &F[\alpha^*(\epsilon_2)]\nonumber\\
&=&\min\{F_1,F_2[\alpha^*(\epsilon_2)], F_{12}[\alpha^*(\epsilon_2)]\}.
\end{eqnarray}
The roles of the users should be interchanged to obtain the upper
bound for $\epsilon_2$ as a function of $\epsilon_1$. The overall upper bound on the achievable region of the MSE exponents is the
intersection of the two.
Note that the upper bound on the MSE exponent in a point-to-point channel that is derived from (\ref{eq:single_nonzero}) in the previous part is equivalent to (\ref{eq:F_functions_div1}). 

\subsubsection{Shannon's sphere-packing bound \label{subsec:Sh_MAC}}

As a second alternative to the divergence bound by Nazari, we adopt Shannon's sphere-packing bound studied Section \ref{subsubsec:sphere_packing} to the two-user setting. 
Before defining the exponents using (\ref{eq:Shannon_rel}), we remind the reader concerning the error exponent region for a MAC, introduced in \cite[Theorem 4]{Weng}. The authors of \cite{Weng} show that for the Gaussian MAC with equal signal powers, denoted by $S$, an outer bound on the error exponent region is dictated by three inequalities. The first two error exponents $E_j, \; j=1,2$ are bounded from above by $E_{su}(R_j,\mathcal{S}/\sigma^2)$ and correspond to the two single-user error events, and the third exponent $E_{su}(R_1+R_2, 2S/\sigma^2)$ corresponds to the joint error event. In all inequalities, $E_{su}(R)$ represents any upper bound on the reliability function of the single-user AWGN channel. Let us denote the three exponents which make use of (\ref{eq:Shannon_rel}) in the minimization by $F_{1,\mbox{\tiny Sh}}$, $F_{2,\mbox{\tiny Sh}}(\alpha)$ and for the two-user component by $F_{12,\mbox{\tiny Sh}}(\alpha)$.

Using the results of \cite{Weng}, the single-user components are functions of the minimum rate given as (\ref{eq:Rmin_sh}). 
\begin{align} \label{eq:F1_Sh}
F_{1,\mbox{\tiny Sh}}^*&= 2R_{\min}+E_{\mbox{\tiny sp}}(\psi(R_{\min}), A) \\
F_{2,\mbox{\tiny Sh}}^*(\alpha) &=F^*_{1,\mbox{\tiny Sh}}-2\alpha \label{eq:F2_Sh}
\end{align}
$F_{12,\mbox{\tiny Sh}}(\alpha)$ has to be optimized numerically since it does not lend itself to closed form analysis. Using (\ref{eq:Shannon_rel}) the third exponent as the two-user component is
\begin{equation}\label{eq:F12_Sh}
F_{12,\mbox{\tiny Sh}}(\alpha)= \min_{R'\geq\frac{\alpha}{2}}[2R'-\alpha+ E_{\mbox{\tiny sp}}(\psi(2R'), 2A)]
\end{equation} where $R'=R+\alpha/2$ and $A=\mathcal{S}/\sigma^2$. Similarly the two--user component $F_{12,\mbox{\tiny Sh}}(\alpha)$ with the minimum rate is denoted by $F_{12,\mbox{\tiny Sh}}^*(\alpha)$. 
The derivation of the bounds on the error exponents follow through in the same way as shown in the previous case that makes use of the divergence bound by simply replacing the three exponents in (\ref{eq:esp1_divergence}) by $F_{1,\mbox{\tiny Sh}}^*$, $F_{2,\mbox{\tiny Sh}}^*$ and $F_{12,\mbox{\tiny Sh}}^*$.


\subsubsection{The upper bound by Ashikhmin et. al.  \label{subsec:Ash_MAC}}

As for the third alternative for $E_u(R_1,R_2)$, we have a more recent result by Ashikhmin {\it et al.} \cite[Theorem 1]{Ashikhmin}, which is a tighter bound on the reliability function $E(R,A)$ with SNR $A$, and we denote it by $E_{\mbox{\tiny ABL}}(R,A)$. Note that $E_{\mbox{\tiny ABL}}(R,A)$ coincides with (\ref{eq:Shannon_rel}) above a certain rate. It is, in fact, a convex combination of (\ref{eq:Shannon_rel}) with a tighter low-rate bound which coincides with the zero-rate exponent unlike (\ref{eq:Shannon_rel}). 
We were not able to characterize the MSE exponents analytically for the Ashikhmin {\em et al.} upper bound on the reliability function. Similar to the Shannon's sphere--packing bound, we denote the three error exponents by $F_{1,\mbox{\tiny ABL}}$, $F_{2,\mbox{\tiny ABL}}(\alpha)$ and $F_{12,\mbox{\tiny ABL}}(\alpha)$, which are evaluated as
\begin{align}
&F_{1,\mbox{\tiny ABL}}= \min_{R\geq 0} {2R+E_{\mbox{\tiny ABL}}(R, A)} \nonumber \\ 
&F_{2,\mbox{\tiny ABL}}(\alpha) =F_{1,\mbox{\tiny ABL}}-2\alpha \label{eq:F2_Ash}  \\
&F_{12,\mbox{\tiny ABL}}(\alpha) = \min_{R\geq 0}[2R+ E_{\mbox{\tiny ABL}}(2R+\alpha, 2A)] \nonumber 
\end{align} where $R'=R+\alpha/2$. 
The optimal values are replaced in (\ref{eps_1_general}) to determine the MSE exponents.
It should be mentioned that the MSE exponent region in this case may coincide for some choice of SNR with the region based on (\ref{eq:Shannon_rel}) since the two error exponents coincide for some rates.  
In Section \ref{sec:numerical} the three bounds on the MSE exponents in a two-user MAC are numerically evaluated and their performances are compared as a function of various values of SNR.

\section{Numerical Results \label{sec:numerical}}
In Figure \ref{fig:lowerbounds}, we first present a numerical evaluation of the bounds for the single-user problem that was treated in Section \ref{sec:single_user} with several bounds proposed for the same problem from the literature alongside one achievable scheme.
Following the order of the curves in the legend, $M$-ary Scalar Quantization and $M$-ary Simplex refers to the exact MSE of a uniform scalar quantizer with $\log_2M$ bits that is mapped to a regular $M$-ary simplex. Note that this combination has an exponential behaviour as $O(e^{-\mathcal{E}/6})$ which is higher than that of all the lower bounds.  We also show the rate-distortion lower bound from Goblick \cite{goblick} $D=\frac{1}{2\pi e}e^{-\mathcal{E}}$ for the sake of comparison. The four remaining lower bounds make use of the results from \cite{Merhav12b} and the work reported here. The two new lower bounds correspond to (\ref{eq:Poly_single_zero}) and (\ref{eq:Sh_single_zero}) combined with (\ref{merhav-mod-lhs}).  The previously best lower bound corresponds to (\ref{eq:Sh_single_zero}) combined with the lower bound through the use of \cite[eq. 13]{Merhav12b}. We also show a conjectured bound which results from the combination of (\ref{merhav-mod-lhs}) with the exact error-probability of a regular $M$-ary simplex. The validity of this bound depends on the validity of the Weak Simplex Conjecture.  It is interesting to note that the bound obtained through the use of (\ref{eq:Poly_single_zero}) with (\ref{merhav-mod-lhs}) comes very close to the conjectured bound even for moderate signal energies.

In Figure \ref{fig:numeric3}, 
we present numerical evaluation of (\ref{eq:twouser_MSE_imp}) for different values of $\theta$. Note that signal-to-noise ratio (SNR) which is chosen equal for both transmitters as $\mathcal{E}/\sigma^2$. 
The \textit{wall} and \textit{floor}, the vertical and horizontal parts of the black curve to the axes, correspond to $\mathrm{MSE}_{\mathrm{single},j}$. The red and blue curves represent all possible bounds for ${\theta \in[0,1)}$. The convex hulls are depicted in solid and dotted black curves using the two-user adaption of the classical Shannon's zero-rate bound given by (\ref{eq:Pzr}) and the lower bound given by (\ref{eq:Pzr_poli}).

In Figure \ref{fig:numeric2}, the three bounds on the MSE exponents are numerically evaluated for different values of SNR, which is chosen equal for both transmitters. Clearly, the divergence bound is the weakest one for all values of SNR, whereas the outer bound evaluated using the reliability function bound by Ashikhmin {\em et al.}, labeled as ABL in the legend, is the tightest. It seems to coincide with the bound using (\ref{eq:Shannon_rel}) for high SNR levels in the portion not dominated by the single-user error-event. It is worth mentioning the difference between the performance of the divergence bound and reliability function is most significant for low SNR levels.
\begin{figure}[htp] 
\centering
\includegraphics[width=0.7\linewidth]{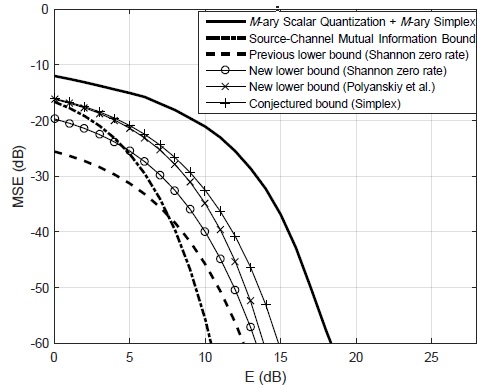}
\caption{Comparison of lower bounds on the MSE for a point-to-point channel} \label{fig:lowerbounds}
\end{figure}
\begin{figure}[H]
 \centering
    \begin{subfigure}[htp]{0.48\textwidth}
        \includegraphics[width=1.0\linewidth]{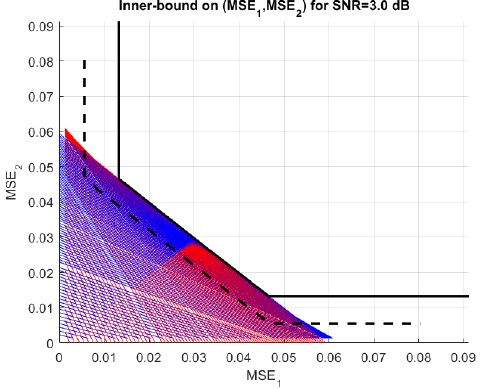}
         \label{fig:numeric3a}
    \end{subfigure}
    \begin{subfigure}[htp]{0.48\textwidth}
       \includegraphics[width=1.0\linewidth]{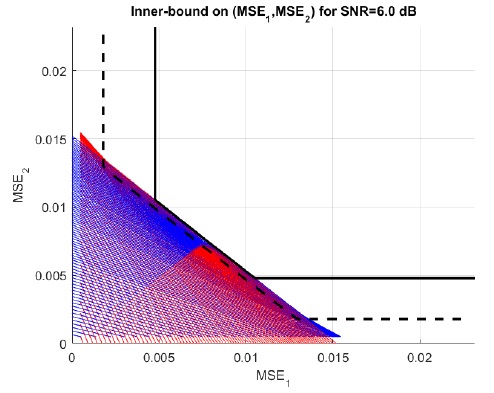}
        \label{fig:numeric3b}
    \end{subfigure}
  \begin{subfigure}[htp]{0.48\textwidth}
       \includegraphics[width=1.0\linewidth]{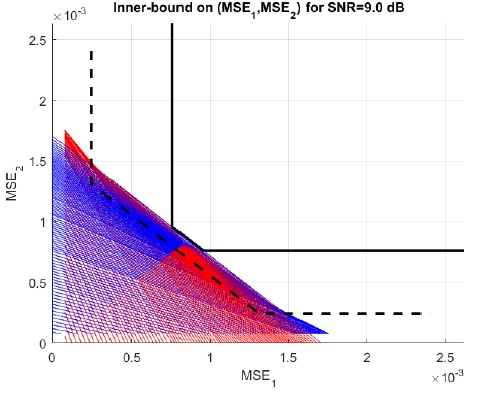}
        \label{fig:numeric3c}
    \end{subfigure}
\caption{Numerical evaluation of (\ref{eq:twouser_MSE_imp}) for different values of SNR and all possible values of $\theta$ where the dotted and solid boundaries  represent the bounds using (\ref{eq:Pzr}) and (\ref{eq:Pzr_poli}), respectively.
}   \label{fig:numeric3}
\end{figure}

\begin{figure} [htp]
\centering
\includegraphics[width=.7\linewidth]{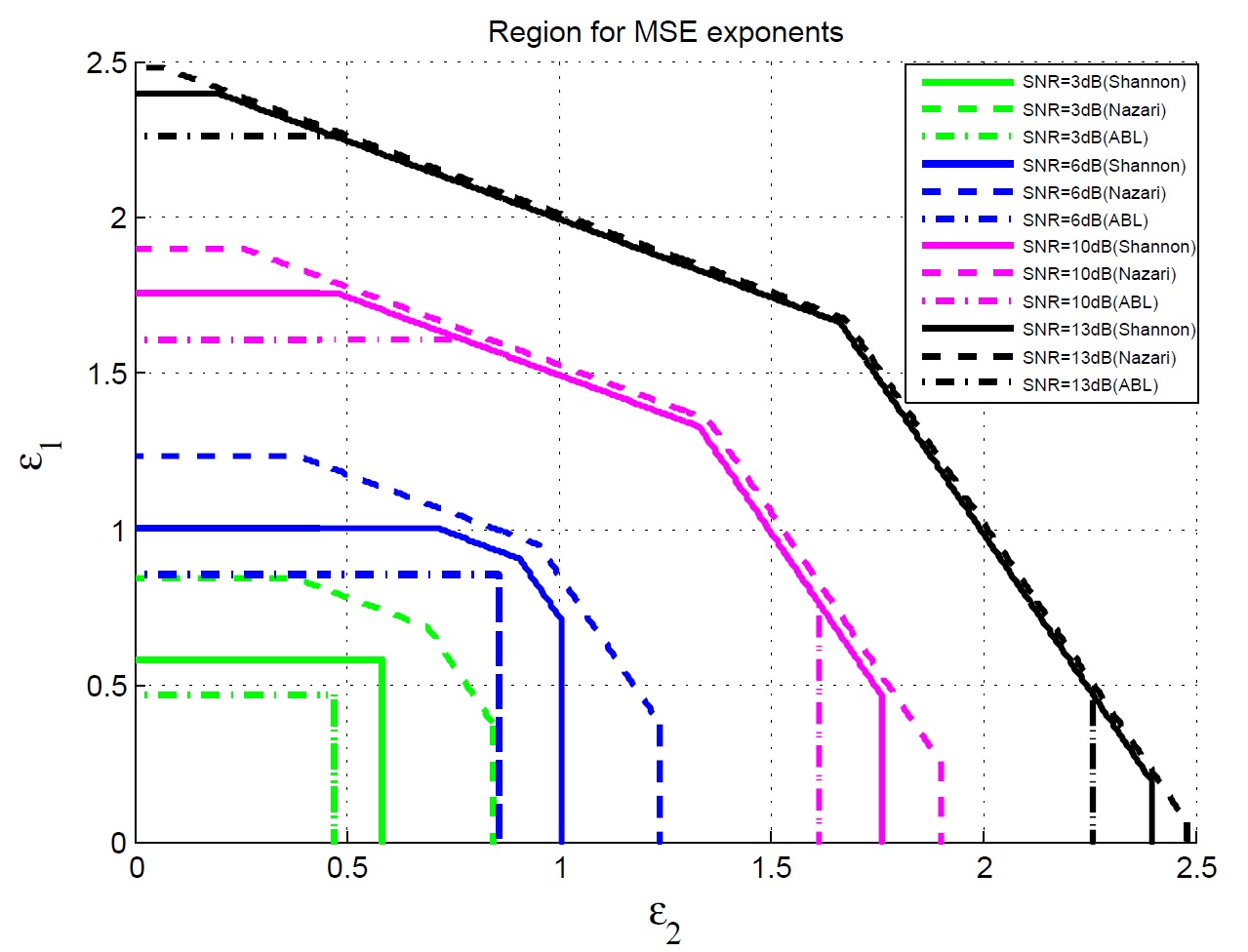}
\caption{Numerical evaluation of the upper bounds on the error exponents for different values of SNR.}
\label{fig:numeric2}
\end{figure}

\section{Conclusion \label{sec:conc}}

New lower bounds on any linear combination of the MSE's are derived for two-user separate modulation and joint estimation of parameter on a discrete-time Gaussian MAC without bandwidth constraints. To this end, we used zero-rate lower bounds on the error probability of Gaussian channels by Shannon and Polyanskiy \textit{et al.}. Numerical results showed that, the multi-user adaptation of the zero-rate lower bound by Polyanskiy \textit{et al.} provides a tighter overall lower bound on the MSE pairs than the classical Shannon bound. Additionally, we introduced upper bounds on the MSE exponents that could make use of any bound on the error exponent of a single-user AWGN channel. The obtained results are numerically evaluated for three different bounds on the reliability function of the Gaussian channel. It is shown that applying the reliability function by Ashikhmin \textit{et al.} \cite{Ashikhmin} to the MAC provides a significantly tighter characterization than Shannon's sphere-packing bound \cite{Shannon59} and the divergence bound \cite{Nazari-thesis}.
\section{Appendix \label{sec:App}}
\subsection{The derivation of the new zero-rate lower bound \label{subsec:App_poli}}

\cite[Theorem 41]{Polyanskiy} provides a lower bound on the average error probability for the AWGN channel as a function of the statistics of two random variables $H_N$ and $G_N$. Specifically, $H_N$ is defined as \cite[eq. 205]{Polyanskiy}
\begin{equation}
H_N = C+\frac{\log_2 e}{2}\frac{(2^{2C/N}-1)}{2^{2C/N}}\sum_{i=1}^{n}\left(1-Z_i^2+\frac{2\sigma}{\sqrt{\mathcal{S}}}Z_i\right),
\end{equation}
where $C=\frac{N}{2}\log_2\left(1+\frac{\mathcal{S}}{\sigma^2}\right)$ and $Z_i$ are all i.i.d. $\mathcal{N}(0,1)$. In order to simplify this for the finite-energy case, consider the random variables $Q_0=\frac{1}{\sqrt{N}}\sum_{i=1}^{N}Z_i$ so that $Q_0\sim\mathcal{N}(0,1)$ and $Q_{1,N}=\frac{1}{N}\sum_{i=1}^{N}Z_i^2$, so that $\mathrm{Var}(Q_{1,N})=\frac{2}{N}$. The first condition for the Polyanskiy {\em et al.} converse is that
\begin{equation}
\Pr\left(H_N\geq\gamma_n\right) = 1-\epsilon(\mathcal{E},M,N) \label{eq:Hconstraint}
\end{equation}
where $\epsilon(\mathcal{E},M,N)$ is the average probability of error. Expressing the right-hand tail of the c.d.f. of $H_N$ in terms of $Q_0$ and $Q_1$ yields
\begin{equation}
\Pr\left(H_N\geq\gamma_N\right) = \Pr\left(C+\frac{N(2^{2C/N}-1)\log_2 e}{2^{2C/N +1}}\left(1-Q_{1,N}\right)+\frac{N (2^{2C/N}-1)\log_2 e}{2^{2C/N}}Q_0 \geq \gamma_n\right) \label{eq:Heq}
\end{equation} 
and rearranging (\ref{eq:Heq}) in terms of $Q_0$ provides
\begin{equation}
\Pr\left(H_N\geq\gamma_N\right) = \Pr\left(Q_0\geq\frac{(\gamma_N-C)}{\log_2 e}\frac{2^{2C/N}}{\sqrt{\mathcal{E}}/\sigma}+\frac{\sqrt{\mathcal{E}}}{2\sigma}\left(1-Q_{1,N}\right)\right) \label{eq:poly1}
\end{equation}  
Now, $1-Q_{1,N}$ converges to 0 with $N$,  so we have the following  bound on (\ref{eq:poly1}) which is tight for large $N$ and some $\mu_N>0$
\begin{align*} \label{eq:Heq2}
\Pr\left(H_N\geq\gamma_N\right) &\leq \Pr(Q_{1,N}\leq 1+\mu_N) \Pr\left(Q_0\geq\frac{(\gamma_N-C)}{\log_2 e}\frac{2^{2C/N}}{\sqrt{\mathcal{E}}/\sigma}-\mu_N\frac{\sqrt{\mathcal{E}}}{2\sigma}\right) + \Pr(Q_{1,N}>1+\mu_N)\notag \\
\end{align*}
\begin{align}
&\leq \Pr\left(Q_0\geq\frac{(\gamma_N-C)}{\log_2 e}\frac{2^{2C/N}}{\sqrt{\mathcal{E}}/\sigma}-\mu_N\frac{\sqrt{\mathcal{E}}}{2\sigma}\right) + \delta_N \notag \\
&=Q\left(\frac{(\gamma_N-C)}{\log_2 e}\frac{2^{2C/N}}{\sqrt{\mathcal{E}}/\sigma}-\mu_N\frac{\sqrt{\mathcal{E}}}{2\sigma}\right)+ \delta_N 
\end{align}
where $\delta_N=\Pr(Q_{1,N}>1+\mu_N)=1-\frac{1}{\Gamma\left(\frac{N}{2}\right)}\gamma\left(\frac{N}{2},\frac{N(1+\mu_N)}{2}\right)\leq\left(1+\mu_N\right)e^{-\frac{N\mu_N}{2}}$ \cite[p.1325,Lemma 1]{Massart}.
Combining (\ref{eq:Heq2}) with (\ref{eq:Hconstraint}) yields
\begin{equation} \label{ineq:poli_eps}
\frac{(\gamma_N-C)}{\log_2 e}\frac{2^{2C/N}}{\sqrt{\mathcal{E}}/\sigma}-\mu_N\frac{\sqrt{\mathcal{E}}}{2\sigma} \leq Q^{-1}\left(1-\epsilon(\mathcal{E},M,N)-\delta_N\right)
\end{equation}

Turning now to the $G_N$, from \cite[eq. 204]{Polyanskiy} we have
\begin{align}
G_N &= C-\frac{(2^{2C/N}-1)\log_2 e}{2}\sum_{i=1}^{N}\left(1+Z_i^2-2\sqrt{1+\frac{\sigma^2}{\mathcal{S}}}Z_i\right) \notag \\
&=C-\frac{\mathcal{E}\log_2 e}{2\sigma^2}\left(1+Q_{1,N}\right)+\frac{\log_2 e}{\sigma}2^{C/N} \sqrt{\mathcal{E}} Q_0
\end{align}
Rearranging $\Pr\left(G_N\geq\gamma_N\right)$ in terms of $Q_0$ yields
\begin{align}
\Pr\left(G_N\geq\gamma_N\right) &= \Pr\left(Q_0\geq\frac{(\gamma_N-C)}{\log_2 e\sqrt{(\mathcal{E}/\sigma)}2^{C/N}}+ \frac{\sqrt{\mathcal{E}}}{2^{C/N}}\frac{1+Q_{1,N}}{2\sigma}\right)\notag \\
&\geq(1-\delta_N)\Pr\left(\left. Q_0\geq\frac{(\gamma_N-C)}{\log_2 e\sqrt{(\mathcal{E}/\sigma)}2^{C/N}}+ \frac{\sqrt{\mathcal{E}}}{2^{C/N}}\frac{1+Q_{1,N}}{2\sigma}\right \rvert Q_{1,n}\leq 1+\mu_N\right)\notag \\
&\overset{(a)}{\geq}(1-\delta_N)\Pr\left(Q_0\geq \frac{Q^{-1}\left(1-\epsilon(\mathcal{E},M,N)-\delta_N\right)}{2^{3C/N}} + \mu_N\frac{\sqrt{\mathcal{E}}}{\sigma 2^{3C/N+1}} + \frac{\sqrt{\mathcal{E}}}{2^{C/N}}\left(\sigma^{-1}+\frac{\mu_N}{2\sigma}\right)\right)\notag \\
&=(1-\delta_N)Q\left(\frac{Q^{-1}\left(1-\epsilon(\mathcal{E},M,N)-\delta_N\right)}{2^{3C/N}} + \mu_N\frac{\sqrt{\mathcal{E}}}{\sigma 2^{3C/N+1}} + \frac{\sqrt{\mathcal{E}}}{2^{C/N}}\left(\sigma^{-1}+\frac{\mu_N}{2\sigma}\right)\right)
\end{align} where step (a) is obtained using (\ref{ineq:poli_eps}). 
Polyanskiy's bound in \cite[eq.208]{Polyanskiy} on the signal-set cardinality becomes
\begin{equation}
M\leq\frac{1}{\Pr(G_N\geq\gamma_N)}\leq \left[(1-\delta_N)Q\left(\frac{Q^{-1}\left(1-\epsilon(\mathcal{E},M,N)-\delta_N\right)}{2^{3C/N}} + \mu_N\frac{\sqrt{\mathcal{E}}}{\sigma 2^{3C/N+1}} + \frac{\sqrt{\mathcal{E}}}{2^{C/N}}\left(\sigma^{-1}+\frac{\mu_N}{2\sigma}\right)\right)\right]^{-1}
\end{equation}
which when rearranged for the error probability becomes
\begin{equation}
\epsilon(\mathcal{E},M,N)\geq Q\left(\frac{\sqrt{\mathcal{E}}}{\sigma}\left(\left(1+\frac{\mathcal{E}}{N\sigma^2}\right)\left(1+\frac{\mu_N}{2}\right)+\frac{\mu_N}{2}\right)-\left(1+\frac{\mathcal{E}}{N\sigma^2}\right)^{3/2}Q^{-1}\left(\frac{1}{M(1-\delta_N)}\right)\right)-\delta_N
\end{equation}
Now, $\lim_{N\rightarrow\infty}\delta_N=0$, so the limiting expression becomes
\begin{equation} 
\lim_{N\rightarrow\infty}\epsilon(\mathcal{E},M,N)\geq Q\left(\frac{\sqrt{\mathcal{E}}}{\sigma}(1+\mu)-Q^{-1}\left(\frac{1}{M}\right)\right)
\end{equation}
for any arbitrarily small $\mu>0$. The obtained bound is given by (\ref{eq:Poly_single_zero}) in Section \ref{subsec:poli}.

\subsection{ The average squared Euclidean distance derivation for a two-user MAC \label{subsec:euclid}}
The average squared Euclidean distance for the pairs represented by the first term in (\ref{eq:Pe-MAC1}) is given by
\begin{align}
 D_2^2 (u_1,u_2) &= \frac{1}{M_{2}(M_{2}-1)}\sum_{i'_1=1}^{M_{2}}\sum_{i'_2=1}^{M_{2}} \sum_{n=1}^{N}\left|x_{2,i'_1,n}-x_{2,i'_2,n}\right|^2 \notag\\
                             &= \frac{2}{M_{2}(M_{2}-1)}\left[M_{2}\sum_{i'=1}^{M_{2}}\left\|\mathbf{x}_{2,i'}\right\|^2 - \sum_{n=1}^{N}\left|\sum_{i'}x_{2,i',n}\right|^2\right]\notag\\
              & \leq\frac{2}{(M_{2}-1)}\sum_{i'=1}^{M_{2}}\left\|\mathbf{x}_{2,i'}\right\|^2 \notag\\
                             & \leq\frac{2M_{2}}{(M_{2}-1)}\mathcal{E}_2 
\end{align} Note that the derivation given above applies to $D_1^2 (u_1,u_2)$ with $M_1$ as well. 
For the third term we have
\begin{align}
&D_{12}^2 (u_1,u_2) = \frac{1}{M_1M_{2}(M_1-1)(M_{2}-1)}\sum_{i_1=1}^{M_1}\sum_{i'_1=1}^{M_{2}}\sum_{i_2\neq i_1}\sum_{i'_2\neq i'_1} \sum_{n=1}^{N}\left|(x_{1,i_1,n}-x_{1,i_2,n})+(x_{2,i'_1,n}-x_{2,i'_2,n})\right|^2 \notag\\
&= \frac{1}{M_1 (M_{1}-1)}\sum_{i_1=1}^{M_1}\sum_{i_2\neq i_1}\sum_{n=1}^{N} |x_{1,i_1,n}-x_{1,i_2,n}|^2+
\frac{1}{M_2(M_{2}-1)}\sum_{i_1'=1}\sum_{i_2'\neq i_1'}\sum_{n=1}^{N} |x_{2,i'_1,n}-x_{2,i'_2,n}|^2+\notag\\
 &\;\;\;\;\;\frac{2}{M_1M_{2}(M_1-1)(M_{2}-1)}\sum_{i_1=1}^{M_1}\sum_{i'_1=1}^{M_{2}}\sum_{i_2\neq  i_1}\sum_{i'_2\neq i'_1} \sum_{n=1}^{N}\mathrm{Re}\left((x_{1,i_1,n}-x_{1,i_2,n})(x_{2,i'_1,n}-x_{2,i'_2,n})^*\right) \notag \\
&= \frac{1}{M_1 (M_{1}-1)}\sum_{i_1=1}^{M_1}\sum_{i_2=1}^{M_1}\sum_{n=1}^{N} |x_{1,i_1,n}-x_{1,i_2,n}|^2+
\frac{1}{M_2(M_{2}-1)}\sum_{i_1'=1}^{M_2}\sum_{i_2'=1}^{M_2}\sum_{n=1}^{N} |x_{2,i'_1,n}-x_{2,i'_2,n}|^2+\notag\\
 &\;\;\;\;\;\frac{2}{M_1M_{2}(M_1-1)(M_{2}-1)}\sum_{i_1=1}^{M_1}\sum_{i'_1=1}^{M_{2}}\sum_{i_2=1}^{M_1}\sum_{i'_2=1}^{M_2} \sum_{n=1}^{N}\mathrm{Re}\left((x_{1,i_1,n}-x_{1,i_2,n})(x_{2,i'_1,n}-x_{2,i'_2,n})^*\right) \notag \\
&=\frac{1}{M_1 (M_{1}-1)}\sum_{i_1=1}^{M_1}\sum_{i_2=1}^{M_1}\sum_{n=1}^{N} |x_{1,i_1,n}-x_{1,i_2,n}|^2+
\frac{1}{M_2(M_{2}-1)}\sum_{i_1'=1}^{M_2}\sum_{i_2'=1}^{M_2}\sum_{n=1}^{N} |x_{2,i'_1,n}-x_{2,i'_2,n}|^2+\notag\\
&\;\;\;\;\; \frac{2}{M_1M_{2}(M_1-1)(M_{2}-1)} \mathrm{Re} \left(\underbrace{\sum_{i_1=1}^{M_1}\sum_{i_2=1}^{M_{1}}\sum_{n=1}^{N}(x_{1,i_1,n}-x_{1,i_2,n})}_{0} \underbrace{\sum_{i'_1=1}^{M_{2}}\sum_{i'_2=1}^{M_2} \sum_{n=1}^{N}(x_{2,i'_1,n}-x_{2,i'_2,n})^* }_{0}\right) \notag \\
& =\frac{2}{M_1(M_1-1)}\left[M_1\sum_{i=1}^{M_1}||\mathbf{x}_{1,i}||^2 - \sum_{n=1}^{N}\left|\sum_{i=1}^{M_{1}}x_{1,i,n}\right|^2\right] + \frac{2}{M_{2}(M_{2}-1)}\left[M_{2}\sum_{i'=1}^{M_{2}}||\mathbf{x}_{2,i'}||^2 - \sum_{n=1}^{N}\left|\sum_{i'=1}^{M_2}x_{2,i',n}\right|^2\right]  \notag\\
 & \leq\frac{2 M_1}{(M_1-1)}\mathcal{E}_1 +\frac{2 M_{2}}{(M_{2}-1)}\mathcal{E}_2 
\end{align}

\subsection{Bounding the error probability in a MAC  \label{subsec:sum_MSE_improved} }

Here we will apply the modification applied to the single-user derivation that resulted in the improved lower bound (\ref{merhav-mod-lhs}) to the two-user MAC. The upper bound on the overall error probability given by (\ref{Pe_ind_MAC1}) is derived as follows
\begin{align} \label{eq:pe_mod}
&P_e=\int_{0}^{1-(M_1-1)\Delta_1} du_1p(u_1)\int_{0}^{1-(M_2-1)\Delta_2}du_2p(u_2)P_e(u_1,u_2) \notag\\
& \leq
\frac{1}{M_{1}M_{2}}\sum_{i=1}^{M_1}\sum_{i'=1}^{M_{2}}\int_{0}^{1-(M_1-1)\Delta_1} du_1 \int_{0}^{1-(M_2-1)\Delta_2}du_2 \Pr \left \{ |U_1-\hat{U}_1(\mathbf{y})| > \Delta_1/2 | U_1=u_1+i\Delta_1,U_2=u_2+i'\Delta_2 \right \} \notag\\
&+ \frac{1}{M_{1}M_{2}}\sum_{i=1}^{M_1}\sum_{i'=1}^{M_{2}}\int_{0}^{1-(M_1-1)\Delta_1} du_1\int_{0}^{1-(M_2-1)\Delta_2}du_2 \Pr\left\{|U_2-\hat{U}_2(\mathbf{y})| >\Delta_2/2| U_1=u_1+i\Delta_1, U_2=u_2+i'\Delta_2 \right \} \notag \\
&= \frac{1}{M_{1}M_{2}}\sum_{i=1}^{M_1}\sum_{i'=1}^{M_{2}}\int_{i \Delta_1}^{1-(M_1-1)\Delta_1+i \Delta_1} du_1\int_{i' \Delta_2}^{1-(M_2-1)\Delta_2+i'\Delta_2}du_2 \Pr \left \{ |U_1-\hat{U}_1(\mathbf{y})| > \Delta_1/2| U_1=u_1, U_2=u_2 \right \} \notag \\
&+ \frac{1}{M_{1}M_{2}}\sum_{i=1}^{M_1}\sum_{i'=1}^{M_{2}}\int_{i \Delta_1}^{1-(M_1-1)\Delta_1+i \Delta_1} du_1\int_{i' \Delta_2}^{1-(M_2-1)\Delta_2+i'\Delta_2}du_2
\Pr\left\{|U_2-\hat{U}_2(\mathbf{y})| >\Delta_2/2 | U_1=u_1, U_2=u_2 \right \} \notag \\
&= \frac{1}{M_{1}M_{2}}\sum_{i=1}^{M_1}\sum_{i'=1}^{M_{2}}\Pr \left \{ |U_1-\hat{U}_1(\mathbf{y})| > \frac{\Delta_1}{2} | i\Delta_1 \leq U_1\leq 1-(M_1-1)\Delta_1+i\Delta_1, i'\Delta_2 \leq U_2\leq 1-(M_2-1)\Delta_2+i'\Delta_2\right\} \notag \\
&+\frac{1}{M_{1}M_{2}}\sum_{i=1}^{M_1}\sum_{i'=1}^{M_{2}} \Pr\left\{|U_2-\hat{U}_2(\mathbf{y})| >\frac{\Delta_2}{2} | i\Delta_1 \leq U_1\leq 1-(M_1-1)\Delta_1+i\Delta_1,  i'\Delta_2 \leq U_2\leq 1-(M_2-1)\Delta_2+i'\Delta_2 \right\}
\end{align} 
We set the following relationships as $M_1=\left\lceil 1/\Delta_1\right\rceil$ and $M_2=\left\lceil 1/\Delta_2\right\rceil$ so that the lower bound $L_B(\Delta_1,\Delta_2)$ becomes
\begin{align}\label{eq:LBdelta1delta2}
&\frac{1}{\left\lceil 1/\Delta_1\right\rceil \left\lceil 1/\Delta_2\right\rceil} \sum_{i=0}^{\left\lceil 1/\Delta_1\right\rceil-1}\sum_{i'=0}^{\left\lceil 1/\Delta_2\right\rceil-1}  \nonumber \\
& \left[\Pr \left \{ |U_1-\hat{U}_1(\mathbf{y})| > \frac{\Delta_1}{2} | i\Delta_1 \leq U_1\leq 1-(M_1-1)\Delta_1+i\Delta_1,  i'\Delta_2 \leq U_2\leq 1-(M_2-1)\Delta_2+i'\Delta_2\right\}\right. \notag \\
&\left.+\Pr \left \{ |U_2-\hat{U}_2(\mathbf{y})| > \frac{\Delta_2}{2} | i\Delta_1 \leq U_1\leq 1-(M_1-1)\Delta_1+i\Delta_1,  i'\Delta_2 \leq U_2\leq 1-(M_2-1)\Delta_2+i'\Delta_2\right\}\right] \notag \\
&=\frac{1}{\left\lceil 1/\Delta_1\right\rceil \left\lceil 1/\Delta_2\right\rceil} \left [\Pr\left\{|\hat{U}_1(\mathbf{y})-U_1|>\Delta_1/2\right\}+\Pr\left\{|\hat{U}_2(\mathbf{y})-U_2|>\Delta_2/2\right\}\right] \notag \\
&\geq \left(1+\Delta_1-\left\lceil\frac{1}{\Delta_1}\right\rceil\Delta_1\right)\left(1+\Delta_2-\left\lceil\frac{1}{\Delta_2}\right\rceil\Delta_2\right)
P_{ZR}\left(\mathcal{E}_1,\mathcal{E}_2,\left\lceil\frac{1}{\Delta_1}\right\rceil,\left\lceil\frac{1}{\Delta_2}\right\rceil\right) 
\end{align} 

\subsection{Derivation of $C_1(\theta)$ \label{app_ctheta}}
As in Theorem \ref{theorem_2}, setting $\Delta_2=\theta \Delta$ and $\Delta_1=\Delta$ in (\ref{eq:LBdelta1delta2}) yields $C_1(\theta)$ as follows.
\begin{align}
&\int_{0}^{1}  d \Delta  \Delta \left(\left\lceil\frac{1}{\Delta}\right\rceil+\Delta \left\lceil\frac{1}{\Delta}\right\rceil-\left\lceil\frac{1}{\Delta}\right\rceil^2 \Delta \right)\left(\left\lceil\frac{1}{\theta \Delta}\right\rceil+\theta \Delta \left\lceil\frac{1}{\theta \Delta}\right\rceil-\left\lceil\frac{1}{\theta \Delta}\right\rceil^2 \theta \Delta \right)
P_{ZR}\left(\mathcal{E}_1,\mathcal{E}_2,\left\lceil\frac{1}{\Delta}\right\rceil,\left\lceil\frac{1}{\theta \Delta}\right\rceil\right) \notag \\
&= \sum_{i=1+\left \lceil \frac{1}{\theta} \right \rceil }^{\infty}  \mathcal{I}\left(\left\lceil i\theta\right\rceil \eq \left\lceil (i-1)\theta\right\rceil\right) \int_{\frac{1}{\theta i}}^{\frac{1}{\theta (i-1)} } d \Delta \cdot \Delta \left(\left\lceil i\theta \right\rceil+ \Delta \left\lceil i\theta \right\rceil-\left\lceil i\theta\right\rceil^2 \Delta \right) \left(i+\theta \Delta i-i^2 \theta\Delta \right) \notag \\
&+ \sum_{i=1+\left \lceil \frac{1}{\theta} \right \rceil }^{\infty}\mathcal{I}\left(\left\lceil i\theta\right\rceil\neq\left\lceil (i-1)\theta\right\rceil\right) \left(\int_{\frac{1}{\left\lceil \theta (i-1)\right\rceil}}^{\frac{1}{\theta (i-1)}} d \Delta \cdot \Delta \left(\left\lceil (i-1)\theta \right\rceil+ \Delta \left\lceil (i-1)\theta \right\rceil-\left\lceil (i-1)\theta\right\rceil^2 \Delta \right) \left(i+\theta \Delta i-i^2 \theta\Delta\right) \right.\notag \\
& + \left.\int_{\frac{1}{\theta i}}^{\frac{1}{\left\lceil \theta (i-1) \right\rceil}} d \Delta \cdot \Delta \left(\left\lceil i\theta \right\rceil+ \Delta \left\lceil i\theta \right\rceil-\left\lceil i\theta\right\rceil^2 \Delta \right) \left(i+\theta \Delta i-i^2 \theta\Delta \right) \right) P_{ZR}\left(\mathcal{E}_1,\mathcal{E}_2,\left\lceil i\theta\right\rceil,i\right)  \notag \\
&+ \int_{1/(\theta \left\lceil \frac{1}{\theta} \right\rceil )}^{1} d \Delta \cdot 2\Delta \cdot \left(1-\Delta\right) \left( \left\lceil\frac{1}{\theta}\right\rceil+\theta \Delta \left\lceil\frac{1}{\theta}\right\rceil-\left\lceil\frac{1}{\theta}\right\rceil^2 \theta \Delta\right) P_{ZR}\left(\mathcal{E}_1,\mathcal{E}_2,2,\left\lceil\frac{1}{\theta}\right\rceil\right) \notag \\
&\overset{(a)}{=}\sum_{i=1+\left \lceil \frac{1}{\theta} \right\rceil}^{\infty} \left \{ \mathcal{I}\left(c(i) \eq c(i-1) \right) \left(\frac{c(i) (2i-1)}{2i \theta^2 (i-1)^2}+\frac{(3i^2-3i+1)c(i) (\theta(1-i)+1-c(i))}{3\theta^3 i^2(i-1)^3} \right) \right. \nonumber \\
&+ \mathcal{I}\left(c(i) \eq c(i-1)\right)\frac{(c(i) -1)c(i)(2i-1)(2i^2-2i+1)}{4\theta^3 (i-1)^3 i^3}  \nonumber \\
&\left. + \mathcal{I}\left(c(i)\neq c(i-1)\right) \frac{i \cdot c(i-1)}{2}\left( \frac{1}{\theta^2 (i-1)^2}-\frac{1}{c(i-1)^2}\right) \right. \nonumber \\
&\left. + \mathcal{I}\left(c(i)\neq c(i-1)\right) \frac{i \cdot c(i-1) (1-c(i-1)+\theta (1-i))}{3}\left(\frac{1}{\theta^3 (i-1)^3}-\frac{1}{c(i-1)^3}\right) \right. \nonumber \\
&\left.+ \mathcal{I}\left(c(i)\neq c(i-1)\right) \frac{i \cdot c(i-1) \cdot \theta (1-i)(1-c(i-1))}{4}\left( \frac{1}{\theta^4 (i-1)^4}-\frac{1}{c(i-1)^4}\right)\right.\nonumber \\
&\left.+ \mathcal{I}\left(c(i)\neq c(i-1) \right) \left[\frac{i \cdot c(i)}{2}\left(\frac{1}{c(i-1)^2}-\frac{1}{\theta^2 i^2}\right)+ \frac{i \cdot c(i)\cdot  (1-c(i)+\theta (1-i))}{3}\left( \frac{1}{c(i-1)^3}-\frac{1}{\theta^3 i^3}\right) \right] \right. \nonumber \\
&\left. +\mathcal{I}\left(c(i)\neq c(i-1)\right)\frac{i \cdot c(i) \cdot \theta (1-i)(1-c(i))}{4}\left(\frac{1}{c(i-1)^4}- \frac{1}{\theta^4 i^4}\right) \right \}P_{ZR}\left(\mathcal{E}_1,\mathcal{E}_2,\left\lceil i\theta\right\rceil,i\right)\notag \\
&+ \left \{\left( \left\lceil 1/\theta \right\rceil -\frac{1}{\theta^2 \left\lceil 1/\theta \right\rceil}\right) +\frac{2\left(\theta-\theta \left\lceil 1/\theta \right\rceil-1\right)}{3}\left(\left\lceil 1/\theta \right\rceil-\frac{1}{\theta^3 \left\lceil 1/\theta \right\rceil^2} \right)+\left(\left\lceil 1/\theta \right\rceil-\frac{1}{\theta^4 \left\lceil 1/\theta \right\rceil^3} \right)\frac{\theta(\left\lceil 1/\theta \right\rceil-1)}{2}\right \} \notag \\
&\;\;\;\;\;\;\;\;\;\;\;\;\;\;\;\;\;\;\;\;\;\;\;\;\;\;\;\;\;\;\;\;\;\;\;\;\;\;\;\;\;\;\;\;\;\;\;\;\;\;\;\;\;\;\;\;\;\;\;\;\;\;\;\;\;\;\;\;\;\;\;\;\;\;\;\;\;\;\;\;\;\;\;\;\;\;\;\;\;\;\;\;\;\;\;\;\;\;\;\;\;\;\;\;\;\;\;\;\;\;\;\;\;P_{ZR}\left(\mathcal{E}_1,\mathcal{E}_2,2,\left\lceil 1/\theta\right\rceil \right) \notag \\
&=C_1(\theta)
\end{align} In order to simplify the presentation, in step (a), we used the following change of variables $c(i)=\left\lceil i\theta\right\rceil$ and $c(i-1)=\left\lceil (i-1)\theta\right\rceil$.
Combining both sides of the inequality results in the lower bound given by (\ref{2nd}) in Section \ref{subsec:dual_zr_linear}. By analogy, $C_2(\theta)$ can be obtained the same way by swapping the roles of the two users.

\subsection{Divergence Bound- Upper Bounding $E_{sp}(R_1,R_2)$ for the Gaussian MAC \label{subsec:GMAC_Esp}}

Consider the Gaussian MAC defined in (\ref{eq:MAC_def}). 
For convenience, let us consider the subclass $\mathcal{W}$ of additive Gaussian MAC's
$Y\sim\mathcal{N}(x_1+x_2,\sigma_w^2)$. 
First, let us calculate the maximum
conditional mutual informations, $I(X_1;Y|X_2)$ and $I(X_2;Y|X_1)$.
\begin{eqnarray}
I(X_1;Y|X_2)&=&I(X_1;X_1+X_2+N|X_2)\nonumber\\
&=&h(X_1+X_2+N|X_2)-h(X_1+X_2+N|X_1,X_2)\nonumber\\
&=&h(X_1+N|X_2)-h(N)\nonumber\\
&\leq &\int_{-\infty}^{\infty}\mbox{d}x\cdot p_2(x)\cdot h(X_1+N|X_2=x)-\frac{1}{2}\log(2\pi e\sigma_w^2)\nonumber\\
&\leq &\int_{-\infty}^{\infty}\mbox{d}x\cdot p_2(x)\cdot\frac{1}{2}\ln [2\pi e\mbox{Var}\{X_1+N|X_2=x\}]-\frac{1}{2}\log(2\pi e\sigma_w^2)\nonumber\\
&\leq &\frac{1}{2}\ln [2\pi e\cdot\mathbf{E} \mbox{Var}\{X_1+N|X_2\}]-\frac{1}{2}\log(2\pi e\sigma_w^2)\nonumber\\
&=&\frac{1}{2}\ln [2\pi e\cdot \mbox{mmse}\{X_1+N|X_2\}]-\frac{1}{2}\log(2\pi e\sigma_w^2)\nonumber\\
&\leq &\frac{1}{2}\ln [2\pi e\cdot \mathbf{E}\{(X_1+N)^2\}]-\frac{1}{2}\log(2\pi e\sigma_w^2)\nonumber\\
&\leq&\frac{1}{2}\log\left(1+\frac{\mathcal{S}}{\sigma_w^2}\right).
\end{eqnarray}
Similarly, $I(X_2;Y|X_1)\leq \frac{1}{2}\log(1+\mathcal{S}/\sigma_w^2)$.
Both upper bounds are achieved at the same time if $X_1$ and $X_2$ are
independent, zero--mean, Gaussian random variables with variances $\mathcal{S}_1=\mathcal{S}_2=S$.
Thus, the conditions $R_1 \geq I(X_1;Y|X_2)$ and $R_2\geq I(X_2;Y|X_1)$, are equivalent to the condition
\begin{equation}
\label{constraint}
\sigma_w^2 \ge
\max\left\{\frac{\mathcal{S}}{e^{2R_1}-1},\frac{\mathcal{S}}{e^{2R_2}-1}\right\}\dfn \sigma_0^2(R_1,R_2),
\end{equation}
where $\sigma_0^2(R_1,R_2)$ is assumed larger than $\sigma^2$ since $(R_1,R_2)$ are assumed in the achievable region of the real underlying channel $P$. Now, 
\begin{equation}
\mathcal{D}(\mathcal{N}(x_1+x_2,\sigma_w^2)\|
\mathcal{N}(x_1+x_2,\sigma^2))=\frac{1}{2}\left[\frac{\sigma_w^2}{\sigma^2}-
\ln\left(\frac{\sigma_w^2}{\sigma^2}\right)-1\right],
\end{equation}
whose minimum under the constraint (\ref{constraint}) is
\begin{equation}
\mathcal{D}(\mathcal{N}(x_1+x_2,\sigma_0^2(R_1,R_2))\| \mathcal{N}(x_1+x_2,\sigma^2))=\frac{1}{2}\left[\frac{\sigma_0^2(R_1,R_2)}{\sigma^2}- \ln\left(\frac{\sigma_0^2(R_1,R_2)}{\sigma^2}\right)-1\right].
\end{equation}
Since this is independent of $(x_1,x_2)$, the outer maximization over $Q$ degenerates, and the end result is 
\begin{equation}
E_{sp}(R_1,R_2)\leq \frac{1}{2}\left[\frac{\sigma_0^2(R_1,R_2)}{\sigma^2}- \ln\left(\frac{\sigma_0^2(R_1,R_2)}{\sigma^2}\right)-1\right]\dfn 
\bar{E}_{sp}(R_1,R_2)
\end{equation}
\subsection{Minimization of the error exponents for the divergence bound \label{subsec:expo_deriv}}

The minimization of the first exponent $F_1$ given by (\ref{eq:F_functions_div1}) can be written explicitly as  
\begin{equation}\label{eq:F1}
F_1 = \min_{R\geq 0}2R + \frac{1}{2}\left\{\frac{ \mathcal{S}}{e^{2R}-1} - \ln\frac{ \mathcal{S}}{e^{2R}-1}-1\right\} 
\end{equation} Taking the first derivative of the function above based on $R$ and equating to zero as follows
$$\frac{d}{dR}F_1(R) = 2+\frac{1}{2}\left\{\frac{-2 \mathcal{S}e^{2R}}{(e^{2R}-1)^2} + \frac{2e^{2R}}{e^{2R}-1}\right\}=0 $$ yields $3x^2 - ( \mathcal{S}+5)x + 2 =0, \mathrm{with\;\;\;} x = e^{2R}$. The rate value that minimizes the first error exponent is obtained as $$R_1^*= \frac{1}{2}(\log(\mathcal{S}+5+\sqrt{(\mathcal{S})^2+10 \mathcal{S}+1})-\log(6)).$$
Using $R_1^*$, we finally get 
\begin{equation}
F_1^* = \ln\left(\frac{ \mathcal{S}+5+\sqrt{ \mathcal{S}^2+10 \mathcal{S}+1}}{6}\right)
+ \frac{1}{2}\left[\frac{6 \mathcal{S}}{ \mathcal{S}-1\sqrt{\mathcal{S}^2+10 \mathcal{S}+1}}-\ln\left(\frac{6 \mathcal{S}}{ \mathcal{S}-1\sqrt{\mathcal{S}^2+10 \mathcal{S}+1}}\right)-1\right].
\end{equation}
There is no difference in the minimization the second exponent $F_2(\alpha)$ apart from the role of $\alpha$. The minimum of $F_2(\alpha)$ is given by
\begin{equation}\label{eq:F2}
F_2(\alpha)^* = F_1^*- 2\alpha 
\end{equation} where $R_2^*=\frac{1}{2}(\log(\mathcal{S}+5+\sqrt{\mathcal{S}^2+10 \mathcal{S}+1})-\log(6))$.
Lastly, for the last exponent $F_{12}(\alpha)$ we have the following minimization based on $R$
for simplification we use the following change of variables $R'\triangleq R+\frac{\alpha}{2}$. Using our new variable $R'$ the minimization becomes
$F_{12}(\alpha) = \min_{R'\geq\frac{\alpha}{2}}2R'-\alpha+\frac{1}{2}\left\{\frac{2 \mathcal{S}}{e^{4R'}-1} - \ln\frac{ 2 \mathcal{S}}{e^{4R'}-1}-1\right\}$. Taking the first derivative of the third exponent and equating to zero
$$\frac{d}{dR'}F_{12}(R') = 2+\frac{1}{2}\left\{\frac{-8  \mathcal{S} e^{4R'}}{(e^{4R'}-1)^2} + \frac{4e^{4R'}}{e^{4R'}-1}\right\}$$ we get 
$2x^2 - (2 \mathcal{S}+3)x + 1 =0, \mathrm{with\;\;\;} x = e^{4R'}$. $R_{12}^*$ denotes the root of this equality which gives the minimum for the last exponent as follows.
\begin{equation}
 F_{12}^*(\alpha) = \frac{1}{2}\ln\left(\frac{ 2 \mathcal{S}+3+\sqrt{( 2 \mathcal{S})^2+12 \mathcal{S}+1}}{4}\right) +\frac{4 \mathcal{S}}{2 \mathcal{S}-1+\sqrt{(2 \mathcal{S})^2+12 \mathcal{S}+1}} 
+\frac{1}{2}\ln\frac{8 \mathcal{S}}{2 \mathcal{S}-1+\sqrt{(2 \mathcal{S})^2+12 \mathcal{S}+1}}-\alpha
\end{equation}


\bibliography{paper_IT}
\bibliographystyle{IEEEtran}


\end{document}